\documentclass[letterpaper, 10 pt, conference]{ieeeconf} 

\usepackage{graphicx}
\usepackage{xpatch, amsmath,amssymb,bm,mathtools}
\usepackage{mleftright}
\usepackage{xcolor}
\usepackage{fp, tikz, pgfplots, multirow, nicefrac}
\usepackage{pgfplotstable}
\usepackage{filecontents}
\usepackage{multicol}
\usepackage{siunitx}
\usepackage{lipsum}
\usepackage{cite}
\usepackage[hidelinks]{hyperref}
\urlstyle{same}
\usepackage[capitalize]{cleveref}
\usepackage[labelfont=bf, font={footnotesize}]{caption}
\usepackage[pspdf={-dDELAYSAFER}]{auto-pst-pdf}
\usetikzlibrary{arrows}
\usetikzlibrary{shapes}
\usetikzlibrary{patterns}
\usetikzlibrary{decorations.pathmorphing}
\usetikzlibrary{decorations.pathreplacing}
\usetikzlibrary{decorations.shapes}
\usetikzlibrary{decorations.text}
\usetikzlibrary{shapes.misc}
\usetikzlibrary{decorations.markings}
\usetikzlibrary{arrows.meta}
\usetikzlibrary{backgrounds}
\usepgfplotslibrary{external}
\usepgfplotslibrary{fillbetween}
\usepgfplotslibrary{statistics}
\usetikzlibrary{positioning}
\usetikzlibrary{calc}

\pgfplotsset{compat=newest} 

\newtheorem{definition}{Definition}

\newtheorem{remark}{Remark}
\newtheorem{lemma}{Lemma}
\newtheorem{theorem}{Theorem}
\newtheorem{corollary}{Corollary}
\newtheorem{assumption}{Assumption}

\newcommand{\bmt}[1]{\bm{\tilde{#1}}}
\newcommand{\bmb}[1]{\bm{\bar{#1}}}
\newcommand{\bmh}[1]{\bm{\hat{#1}}}
\newcommand{\bmd}[1]{\bm{\dot{#1}}}
\newcommand{\bmdt}[1]{\bm{\dot{\tilde{#1}}}}

\newcommand{\R}{\mathbb{R}}
\newcommand{\tp}{^\mathsf{T}}

\newlength{\fwidth}
\newlength{\fheight}
\setlength{\arraycolsep}{3pt}
\setlength{\belowdisplayskip}{6pt} \setlength{\belowdisplayshortskip}{6pt}
\setlength{\abovedisplayskip}{6pt} \setlength{\abovedisplayshortskip}{6pt}

\IEEEoverridecommandlockouts                              

\overrideIEEEmargins                                      

\title{\LARGE \bf
Is Data All That Matters? The Role of Control Frequency for Learning-Based Sampled-Data Control of Uncertain Systems
}

\author{Ralf R\"omer, Lukas Brunke, Siqi Zhou, and Angela P. Schoellig 
\thanks{The authors are with the Learning Systems and Robotics Lab (learnsyslab.org) and the Munich Institute for Robotics and Machine Intelligence, Technical University of Munich, Germany. Email: {\tt\footnotesize \{ralf.roemer; lukas.brunke; siqi.zhou; angela.schoellig\}@tum.de}.} %
}

\begin{document}

\maketitle
\thispagestyle{empty}
\pagestyle{empty}

\begin{abstract}    
Learning models or control policies from data has become a powerful tool to improve the performance of uncertain systems. While a strong focus has been placed on increasing the amount and quality of data to improve performance,  data can never fully eliminate uncertainty, making feedback necessary to ensure stability and performance. We show that the control frequency at which the input is recalculated is a crucial design parameter, yet it has hardly been considered before. 
    We address this gap by combining probabilistic model learning and sampled-data control.
    We use Gaussian processes (GPs) to learn a continuous-time model and compute a corresponding discrete-time controller. 
    The result is an uncertain sampled-data control system, for which we derive robust stability conditions. We formulate semidefinite programs to compute the minimum control frequency required for stability and to optimize performance.
    As a result, our approach enables us to study the effect of both control frequency and data on stability and closed-loop performance.
    We show in numerical simulations of a quadrotor that performance can be improved by increasing either the amount of data or the control frequency, and that we can trade off one for the other. 
    For example, by increasing the control frequency by~33\%, we can reduce the number of data points by half while still achieving similar performance.
\end{abstract}

\section{Introduction}
Real-world systems such as robots can exhibit complex dynamics, making deriving accurate models from first principles difficult.
Therefore, many studies 
in recent years 
have addressed learning unknown dynamics from measured data using machine learning methods and designing a controller based on the learned model~\cite{Berkenkamp.2015, Ostafew.2016, Rohr.2021, Brunke.2022}. 
Much attention has been paid to the role of data and increasing its amount and quality~\cite{Lederer.2021}. 
However, no derived or learned model can perfectly capture a real system's dynamic behavior~\cite{Nguyen.2011}. 
Therefore, feedback is required to guarantee stability and performance.
The control frequency at which system measurements are fed back to recalculate the control input is often set without taking the dynamics and uncertainty into account~\cite{Berkenkamp.2015}, neglecting that it represents a degree of freedom in the controller design.
However, considering the control frequency as a design parameter can be advantageous, especially for systems such as resource-constrained robot platforms (e.g., drones).
For example, knowledge of the minimum control frequency (MCF) required for guaranteed stability of an uncertain system can help improve energy efficiency by reducing unnecessary computational demand and data transmission.
In this work, we study the effect of both control frequency and data on closed-loop performance.



\begin{figure}
    \centering
    \small
    \begin{tikzpicture}
        \draw[draw=black, dotted, fill=blue!10!white] (-1.8, -0.3) rectangle (1.95, -1.5);
        \node [draw, minimum width=1cm, minimum height=0.5cm] (system) at (0,0) {System};
        \node [draw, minimum width=1cm, minimum height=0.5cm, fill=white] (sampler) at ($(system.center) + (1.25cm,-0.6cm)$) {Sampler};
        \node [draw, minimum width=1cm, minimum height=0.5cm, fill=white] (controller) at ($(system.center) + (0cm,-1.2cm)$) {Controller};
        \node [draw, minimum width=1cm, minimum height=0.5cm, fill=white] (zoh) at ($(system.center) + (-1.25cm,-0.6cm)$) {ZOH};
        \node [draw, align=center, text width=3.1cm, minimum height=0.4cm, dashed, fill=red!10!white] (gp) at ($(controller.center) + (0cm,-1cm)$) {GP Learning \footnotesize{(Sec.~\ref{subsec:meth_learning})}};
        \node[] (data) at ($(gp.north) + (0cm,0.25cm)$) {Data};
        \node [draw, align=center, text width=2.3cm, minimum height=0.4cm, dashed, fill=red!10!white] (sd) at (-5.2, -1) {Robust Sampled-Data \\ Control Design \\ \footnotesize{(Sec.~\ref{subsec:meth_sd_stability} \& \ref{subsec:meth_sd_performance})}}; 
        \node (fc) at (1.75, -1.3) {$f_\mathrm{c}$};    

        \draw[-stealth] (system.east) -| (sampler.north); 
        \draw[-stealth] (sampler.south) |- (controller.east); 
        \draw[-stealth] (controller.west) -| (zoh.south); 
        \draw[-stealth] (zoh.north) |- (system.west);
        \draw[-stealth, dashed] ($(controller.center) + (-1cm,0cm)$) |- ($(gp.north) + (-0.5cm,0.3cm)$) -- ($(gp.north) + (-0.5cm,0cm)$);
        \draw[-stealth, dashed] ($(controller.center) + (1cm,0cm)$) |- ($(gp.north) + (0.5cm,0.3cm)$) -- ($(gp.north) + (0.5cm,0cm)$);
        \draw[-stealth, dashed] (gp.west) -| (sd.south) node[pos=0.35,below] {Uncertain Continuous-Time Model};
        \draw[-stealth, very thick] ($(sd.east) + (0.25cm,0.25cm)$) -- ($(sd.east) + (1.9,0.25cm)$) node[midway,above] {\begin{tabular}{c}
            Control \\ Frequency $f_\mathrm{c}$
        \end{tabular}}; 
        \draw[-stealth, very thick] ($(sd.east) + (0.25cm,-0.1cm)$) -- ($(sd.east) + (1.9,-0.1cm)$) node[midway,below] {\begin{tabular}{c}
             Controller \\ Parameters
        \end{tabular}};
    \end{tikzpicture}
    \caption{In digital control systems (blue shaded box), the sampler, controller, and zero-order-hold~(ZOH) operate at a certain control frequency~$f_\mathrm{c}$. 
    We propose a framework (dashed-line boxes) to simultaneously compute the minimum control frequency and design a controller using an uncertain model learned from data using Gaussian process (GP) regression.
    }
    \label{fig:intro_fig}
\end{figure}
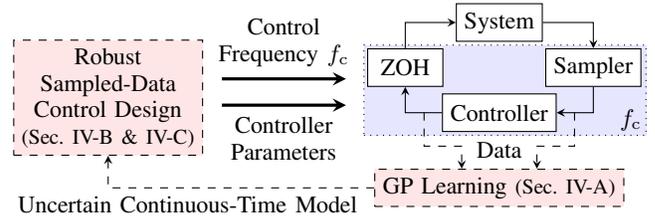

To quantify the uncertainty inherent in a learned dynamics model, probabilistic methods such as Bayesian linear regression~(BLR)~\cite{Bishop.2006} or Gaussian processes~(GPs)~\cite{Rasmussen.2006} have become popular~\cite{Brunke.2022}. While BLR assumes linearity in a set of parameters, GPs have the advantage of being a non-parametric method.
In~\cite{Berkenkamp.2015, Ostafew.2016, Rohr.2021}, GPs are combined with robust control methods.
These works demonstrate that control performance is impacted by model uncertainty and, thus, the training data, but they do not investigate the role of the control frequency.

Few studies have considered the control frequency in the context of model uncertainty.
In~\cite{Berberich.2021}, the maximum sampling interval for stabilizing an unknown linear system is computed directly from data. 
However, the approach assumes bounded noise and involves a computationally expensive iterative optimization scheme for controller design. 
GP-based feedback linearization with a data-dependent delay for updating the control input is considered in~\cite{Dai.2023}.
It is shown empirically that in terms of tracking accuracy, there can be a tradeoff between the accuracy of the GP model and the computational delay.
In~\cite{Metelli.2020}, reducing the control frequency is discussed in the context of reinforcement learning as a way to reduce the sample complexity and thus improve performance.

Learning a discrete-time model of unknown dynamics yields a model specific to a particular sampling time. 
Analyzing the system's behavior and stability
for a different controller sampling time is generally very difficult.
Thus, we consider learning a continuous-time model in this work. Designing a discrete-time controller for a continuous-time system falls within the domain of sampled-data control.
The stability of sampled-data systems can be analyzed via the time-delay approach~\cite{Fridman.2014}, and stability conditions are derived in~\cite{Fridman.2004, Fridman.2010, Seuret.2012}. 
These conditions can also guarantee robust stability for polytopic-type uncertainty, but this becomes computationally intractable for the uncertainty set associated with a dynamics model learned from data.

We propose a framework to design the control frequency based on the uncertainty associated with a dynamics model learned from data as illustrated in Fig.~\ref{fig:intro_fig} and study the role of the control frequency compared to the amount of data. Taking a sampled-data control approach, we robustly stabilize a partially unknown nonlinear continuous-time system with a discrete-time controller. 
Our main contributions are:
\begin{itemize}
    \item We combine GP-based stochastic model learning with sampled-data control to study the effect of the control frequency and model uncertainty on the closed-loop performance.
    \item We derive robust stability conditions as matrix inequalities for a sampled-data control system with learned uncertain dynamics. Based on these, we formulate semidefinite programs (SDPs) for the computation of the MCF and performance optimization. 
    Our framework enables us to control the system at different frequencies without having to re-learn the model.
    \item Through numerical simulations\footnote{All code for reproducing the results reported in this paper is available at \url{https://github.com/ralfroemer99/lb_sd}}, we show and analyze the tradeoff between model uncertainty, affected by the amount of data collected, and control frequency in terms of stability and performance.
\end{itemize}

Our results demonstrate that the choice of the control frequency can be as crucial as collecting more data (to further reduce uncertainty). 


\textbf{Notation}:
We denote the Kronecker product by~$\otimes$, the Hadamard (element-wise) product by~$\circ$ and the probability by~$\mathrm{Pr}(\cdot)$.
In symmetric matrices, $*$ denotes transpose elements that can be inferred from symmetry.
Given a square matrix ${\bm{A}\in \R^{n \times n}}$, ${\mathrm{diag}(\bm{A}) \in \R^n}$ is a vector containing the diagonal elements of $\bm{A}$. 
Given a vector ${\bm{a}\in \R^n}$, ${\mathrm{Diag}(\bm{a})\in \R^{n \times n}}$ is a diagonal matrix containing the elements of $\bm{a}$ on its diagonal. 
Given a matrix~${\bm{B} = \big[\bm{b}_1\, \dots\, \bm{b}_m\big]\in \R^{n \times m}}$, we denote its vectorization by~${\mathrm{vec}(\bm{B}) = \big[\bm{b}_1\tp\, \dots\, \bm{b}_m\tp\big]\tp \in \R^{nm}}$.

\section{Problem Statement} \label{sec:prob}
We consider a dynamical system whose state and input at time $t \in \R_{\geq 0}$ are given by $\bm{x}(t)\in \R^n$ and $\bm{u}(t)\in \R^m$, respectively. The system evolves according to the dynamics 
\begin{align}
    \label{eq:prob_open_loop_system}
    \bmd{x}(t) = \bm{h}(\bm{x}(t), \bm{u}(t)) = \bm{f}(\bm{x}(t), \bm{u}(t)) + \bm{g}(\bm{x}(t), \bm{u}(t)),\!
\end{align}
where the function $\bm{f}:\R^n \times \R^m \rightarrow \R^n$ is known, for example, derived from first principles, and the function ${\bm{g}:\R^n \times \R^m \rightarrow \R^n}$ is unknown, accounting for unmodeled dynamic effects.
Both $\bm{f}$ and $\bm{g}$ are assumed to be continuously differentiable. We denote~${\bm{z}=(\bm{x},\bm{u}) \in \R^{n_z}}$, where $n_z=n+m$, for brevity.
We assume the availability of noisy measurement data collected from system~\eqref{eq:prob_open_loop_system}.
\begin{assumption}
    A dataset of $N$ observations from~\eqref{eq:prob_open_loop_system}
    \begin{align}
        \label{prob:eq_dataset_assumption}
        \mathcal{D} = \left\{\bm{z}^{(i)}, \bm{y}^{(i)}=\bmd{x}^{(i)}-\bm{f}\big(\bm{z}^{(i)}\big)+\bm{w}^{(i)}\right\}_{i=1}^N
    \end{align}
    is available, with targets perturbed by independent and identically distributed (i.i.d.) Gaussian noise ${\bm{w}^{(i)}\sim \mathcal{N}(\bm{0}, \bm{\Sigma}_\mathrm{n})}$.
\end{assumption}

While this assumption requires exact state and input measurements, it allows for Gaussian perturbed observations of the state derivative, which is often approximated via finite differences in practice. Similar assumptions are made, for example, in~\cite{Lederer.2021, Dai.2023}.

The samples in $\mathcal{D}$ serve as training data to learn an approximation of $\bm{g}$, denoted by $\bmh{g}$.
We aim to control system~\eqref{eq:prob_open_loop_system} robustly around a known equilibrium~$\bm{z}_\mathrm{e}=(\bm{x}_\mathrm{e}, \bm{u}_\mathrm{e})$ with $\bm{0} = \bm{f}(\bm{z}_\mathrm{e}) + \bm{g}(\bm{z}_\mathrm{e})$ despite the uncertainty associated to~$\bmh{g}$. 
This does not represent a significant restriction as we can estimate any unknown equilibrium~$\bm{z}_\mathrm{e}$ by solving a nonlinear optimization problem involving $\bm{f}$ and $\bmh{g}$, cf.~\cite{Berkenkamp.2015}.
We use a discrete-time controller with zero-order-hold
\begin{align}
    \label{eq:prob_input}
    \bm{u}(t) = \bm{\pi}(\bm{x}(t_k)), \qquad \forall t \in [t_k, t_{k+1}),
\end{align}
where $\bm{\pi}(\cdot)$ is a control law based on the learned continuous-time model, and $t_k$, $k\in\! \mathbb{N}_0$, ${t_0 = 0}$, are the sampling instants. 
We make an assumption on the sampling instants capturing periodic and aperiodic sampling and define the MCF.
\begin{assumption} \label{ass:sd_sampling_interval}
    The interval between two consecutive sampling instants satisfies ${t_{k+1} - t_k \leq T_\mathrm{s}}$, $\forall k\in \mathbb{N}_0$, where $T_\mathrm{s}$ is an upper bound on the sampling interval.
\end{assumption}

\begin{definition} \label{def:mcf}
    Let $T_{\mathrm{s,max}}$ be the largest value of $T_\mathrm{s}$ such that the system~\eqref{eq:prob_open_loop_system} with the control~\eqref{eq:prob_input} satisfying~\cref{ass:sd_sampling_interval} can be robustly stabilized around~$\bm{z}_\mathrm{e}$. Then, the MCF is given by ${f_\mathrm{c,min} = \frac{1}{T_\mathrm{s,max}}}$.
\end{definition}

The MCF depends on the employed feedback control law~$\bm{\pi}(\cdot)$.
We consider the problem of learning a local linear approximation of $\bm{g}$ from the data~\eqref{prob:eq_dataset_assumption}, computing the MCF together with a linear control law based on the uncertainty of the learned dynamics and aim to optimize the control performance.
Ultimately, we want to study the impact of the control frequency on stability and performance, especially compared to model uncertainty and the amount of data.


\section{Background} \label{sec:background}

\subsection{Gaussian Process Regression} \label{subsec:background_gp}
To simplify notation, we consider learning an unknown scalar function ${g:\R^{n_z}\rightarrow \R}$ from training inputs~$\bm{z}^{(i)}$ and target~${y^{(i)}=g(\bm{z}^{(i)})+w^{(i)}}$, ${i=1,\dots, N}$, which are perturbed by i.i.d. noise ${w^{(i)}\sim \mathcal{N}(0,\sigma_\mathrm{n}^2)}$.
GP regression~\cite{Rasmussen.2006} assumes that the unknown function is drawn from a GP, denoted as $\mathcal{GP}(\mu(\cdot),k(\cdot,\cdot))$, which induces a distribution over functions
such that any finite number of function evaluations is jointly Gaussian distributed.
The prior mean function ${\mu:\R^{n_z}\rightarrow \R}$ can incorporate prior knowledge in the form of an approximate model, and the kernel ${k:\R^{n_z}\times\R^{n_z} \rightarrow \R}$ encodes information about the structure of the unknown function. 
Without loss of generality, we set the mean function to zero. 
Under the GP assumption, the vector of observed targets $\bm{y}=\big[y^{(1)},\dots,y^{(N)}\big]\tp$ and the function value at a query point $\bm{z}^*\in\R^{n_z}$ have the joint probability distribution
\begin{align}
    \label{gp:joint_pd}
    \begin{bmatrix}
        \bm{y} \\ g(\bm{z}^*)
    \end{bmatrix} \sim
    \mathcal{N} \left(\bm{0}, \begin{bmatrix}
        \bmb{K} & \bm{k}(\bm{z}^*)\tp \\
        * & k(\bm{z}^*,\bm{z}^*)
    \end{bmatrix} \right),
\end{align}
where the gram matrix~$\bmb{K}$ and vector~$\bm{k}(\bm{z}^*)$ are defined as ${\bmb{K}=\bm{K} + \sigma_\mathrm{n}^2 \bm{I}_n}$, where ${\bm{K}_{ij}=k\left(\bm{z}^{(i)},\bm{z}^{(j)}\right)}$, and ${\bm{k}(\bm{z}^*)\! =\! \big[k\left(\bm{z}^{(1)},\bm{z}^*\right),\dots,k\left(\bm{z}^{(N)},\bm{z}^*\right)\big]\tp}$, respectively. Conditioning $g(\bm{z}^*)$ on the training data yields the posterior predictive distribution ${g(\bm{z}^*) \sim \mathcal{N}(\mu(\bm{z}^*),\sigma^2(\bm{z}^*))}$ with mean ${\mu(\bm{z}^*) = \bm{k}(\bm{z}^*)\tp \bmb{K}^{-1} \bm{y}}$ and variance ${\sigma^2(\bm{z}^*) = k(\bm{z}^*,\bm{z}^*) - \bm{k}(\bm{z}^*)\tp\bmb{K}^{-1} \bm{k}(\bm{z}^*)}$.

As the derivative is a linear operator, the derivative of a GP is also a GP \cite{Rasmussen.2006}. We can use this property to predict the derivative of $g$ at $\bm{z}^*$, denoted~${\left.\frac{\partial g(\bm{z})}{\partial \bm{z}}\right|_{\bm{z}^*}}$. From~\eqref{gp:joint_pd}, we obtain
\begin{align}
    \begin{bmatrix}
        \bm{y} \\ \left.\frac{\partial g(\bm{z})}{\partial \bm{z}}\right|_{\bm{z}^*}
    \end{bmatrix} \sim
    \mathcal{N} \left(\bm{0}, \begin{bmatrix}
        \bmb{K} & \left.\frac{\partial \bm{k}(\bm{z})}{\partial \bm{z}}\right|_{\bm{z}^*} \\
        * & \left.\frac{\partial^2 k(\bm{z},\bm{z})}{\partial \bm{z} \partial \bm{z}} \right|_{\bm{z}^*}
    \end{bmatrix} \right).
\end{align}
Similar to the derivation of $g(\bm{z}^*)$ from~\eqref{gp:joint_pd}, conditioning ${\left.\frac{\partial g(\bm{z})}{\partial \bm{z}}\right|_{\bm{z}^*}}$ on the observations~$\bm{y}$ yields the predictive distribution
${\left.\frac{\partial g(\bm{z})}{\partial \bm{z}}\right|_{\bm{z}^*} \sim \mathcal{N}(\bm{\mu}'(\bm{z}^*),\bm{\Sigma}'(\bm{z}^*))}$ with mean and variance
\begin{align}
    \label{eq:gp_mean_derivative}
    \bm{\mu}'(\bm{z}^*)\! &= \!
    \left. \frac{\partial \bm{k}(\bm{z})}{\partial \bm{z}} \right|_{\bm{z}^*} \!\!\! \bm{\bar{K}}^{-1} \bm{y}, \\
    \label{eq:gp_variance_derivative}
    \bm{\Sigma}'(\bm{z}^*)\! &= \!
    \left.\frac{\partial^2 k(\bm{z},\bm{z})}{\partial \bm{z} \partial \bm{z}} \right|_{\bm{z}^*} 
    \!\!-\! \left.\frac{\partial \bm{k}(\bm{z})}{\partial \bm{z}}\right|_{\bm{z}^*} \!\!\!\bm{\bar{K}}^{-1} \!\left(\!\left.\frac{\partial \bm{k}(\bm{z})}{\partial \bm{z}}\right|_{\bm{z}^*} \right)\tp\!\!.\!
\end{align}

\subsection{Sampled-Data Systems} \label{subsec:background_sd}
Consider a continuous-time LTI system
\begin{align}
    \label{eq:sd_nominal_open_loop_system}
    \bm{\dot{x}}(t) = \bm{A}\bm{x}(t) + \bm{B}\bm{u}(t), \qquad \bm{x}(0) = \bm{x}_0,
\end{align}
under a discrete-time linear state-feedback control law
\begin{align}
    \label{eq:sd_control_law_lin}
    \bm{u}(t) = \bm{K}\bm{x}(t_k), \qquad \forall t \in [t_k, t_{k+1}),
\end{align}
where $\bm{K}\in \R^{m \times n}$, and the sampling instants $t_k$ satisfy~\cref{ass:sd_sampling_interval}. 
The time-delay approach~\cite{Fridman.2004, Fridman.2010, Seuret.2012, Fridman.2014} to sampled-data systems writes the closed-loop system as
\begin{align}
    \label{eq:sd_nominal_closed_loop_system}
    \bm{\dot{x}}(t) = \bm{A}\bm{x}(t) + \bm{B}\bm{K}\bm{x}(t-\tau(t)), 
\end{align}
where ${\tau(t) = t-t_k}$, $\forall t \in [t_k, t_{k+1})$, is a piecewise-continuous, time-varying delay. 
Stability of~\eqref{eq:sd_nominal_closed_loop_system} can be analyzed using the Lyapunov-Krasovskii functional~\cite{Fridman.2004}
\begin{align}
    \label{eq:sd_lyapunov_functional}
    V(t) = \bm{x}(t)\tp \bm{P}_1 \bm{x}(t) + \!\int_{-T_\mathrm{s}}^0 \int_{t+\theta}^t \bmd{x}(s)\tp \bm{P}_2 \bmd{x}(s) \mathrm{d}s \mathrm{d} \theta,
\end{align}
where $\bm{P}_1 \succ \bm{0}$, $\bm{P}_2 \succ \bm{0}$. The second term in~\eqref{eq:sd_lyapunov_functional} is used to compensate for the delay-dependent part in $\frac{\mathrm{d}}{\mathrm{d}t}\bm{x}(t)\tp \bm{P}_1 \bm{x}(t)$.

\begin{lemma} [\cite{Fridman.2004}, Lemma 2.3] \label{lem:sd_nominal_step2}
    The control~\eqref{eq:sd_control_law_lin} asympto-tically stabilizes system~\eqref{eq:sd_nominal_open_loop_system} for all samplings satisfying Assumption~\ref{ass:sd_sampling_interval} if there exist matrices $\bm{Q}_1 = \bm{Q}_1\tp \succ \bm{0}$, $\bm{Q}_2$, $\bm{Q}_3$, $\bm{Z}_1$, $\bm{Z}_2$, $\bm{Z}_3$, $\bm{R} = \bm{R}\tp \succ \bm{0}$, all in $\R^{n \times n}$, and ${\bm{Y} \in \R^{m \times n}}$, that satisfy the matrix inequalities
    \begin{align}
        \label{eq:sd_nominal_lmi3}
        \bm{W}_{\bm{A},\bm{B}} \triangleq 
        \begin{bmatrix}
            \bm{\Xi} & \bm{\Xi}_{\bm{A}, \bm{B}} & T_\mathrm{s}\bm{Q}_2\tp \\
            * & -\bm{Q}_3 - \bm{Q}_3\tp + T_\mathrm{s}\bm{Z}_3 & T_\mathrm{s}\bm{Q}_3\tp \\
            * & * & -T_\mathrm{s}\bm{R} 
        \end{bmatrix} \prec \bm{0}, \\
        \label{eq:sd_nominal_lmi4}
        \begin{bmatrix}
            \bm{Q}_1 \bm{R}^{-1}\bm{Q}_1 & \bm{0} & \bm{Y}\tp \bm{B}\tp \\
            * & \bm{Z}_1 & \bm{Z}_2 \\
            * & * & \bm{Z}_3
        \end{bmatrix} \succeq \bm{0},
    \end{align}
    where $*$ denotes symmetry, ${\bm{\Xi} = \bm{Q}_2 + \bm{Q}_2\tp + T_\mathrm{s} \bm{Z}_1}$ and $\bm{\Xi}_{\bm{A}, \bm{B}} = \bm{Q}_3 - \bm{Q}_2\tp + \bm{Q}_1\bm{A}\tp + T_\mathrm{s}\bm{Z}_2 + \bm{Y}\tp \bm{B}\tp$. The stabilizing state-feedback gain is then given by $\bm{K}=\bm{Y}\bm{Q}_1^{-1}$. 
\end{lemma}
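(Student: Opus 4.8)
The plan is to follow the descriptor (input-delay) approach to sampled-data systems, since the statement restates Lemma~2.3 of~\cite{Fridman.2004}. First I would rewrite the zero-order-hold loop as the time-delay system~\eqref{eq:sd_nominal_closed_loop_system} with $\tau(t) = t - t_k \in [0, T_\mathrm{s})$ and work with the Lyapunov--Krasovskii functional~\eqref{eq:sd_lyapunov_functional}. The goal is to establish $\dot V(t) < 0$ along trajectories and then to convert that condition — which is bilinear in $\bm{P}_1$, the gain $\bm{K}$, and the auxiliary matrices — into the stated matrix inequalities by a congruence transformation together with a change of variables, reading off $\bm{K} = \bm{Y}\bm{Q}_1^{-1}$ at the end.

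For the derivative, differentiating the first term of $V$ gives $2\bm{x}(t)\tp\bm{P}_1\bmd{x}(t)$, and differentiating the double integral gives $T_\mathrm{s}\,\bmd{x}(t)\tp\bm{P}_2\bmd{x}(t) - \int_{t-T_\mathrm{s}}^{t}\bmd{x}(s)\tp\bm{P}_2\bmd{x}(s)\,\mathrm{d}s$. Following the descriptor method I would add the identically-zero term $2\big[\bm{x}(t)\tp\bm{M}_2 + \bmd{x}(t)\tp\bm{M}_3\big]\big[\bm{A}\bm{x}(t) + \bm{B}\bm{K}\bm{x}(t-\tau(t)) - \bmd{x}(t)\big]$ with slack matrices $\bm{M}_2, \bm{M}_3$ (which become $\bm{Q}_2, \bm{Q}_3$ after the transformation below). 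Since the delay enters only through $\bm{x}(t-\tau(t)) = \bm{x}(t) - \int_{t-\tau(t)}^{t}\bmd{x}(s)\,\mathrm{d}s$, I would substitute this identity, use $-\int_{t-T_\mathrm{s}}^{t}\bmd{x}(s)\tp\bm{P}_2\bmd{x}(s)\,\mathrm{d}s \le -\int_{t-\tau(t)}^{t}\bmd{x}(s)\tp\bm{P}_2\bmd{x}(s)\,\mathrm{d}s$ (the integrand is nonnegative), and upper-bound the leftover integral cross-terms by a free-weighting / completion-of-squares step that introduces $\bm{Z}_1, \bm{Z}_2, \bm{Z}_3$; the validity of that bound is exactly the requirement that a symmetric matrix built from $\bm{P}_1, \bm{P}_2, \bm{Z}_i$ and $\bm{B}\bm{K}$ be positive semidefinite, which (after the same change of variables) is~\eqref{eq:sd_nominal_lmi4}. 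Collecting the remaining terms, $\dot V(t) < 0$ becomes a single quadratic form $\bm\xi\tp\bm\Psi\bm\xi < 0$ in $\bm\xi = \big[\bm{x}(t)\tp,\ \bmd{x}(t)\tp,\ \bm\zeta\tp\big]\tp$ with $\bm\zeta$ the normalized integral remainder, i.e.\ $\bm\Psi \prec \bm{0}$.

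The matrix $\bm\Psi$ is bilinear because of the products of $\bm{K}$ with $\bm{M}_2, \bm{M}_3$ and the $\bm{P}_1, \bm{P}_2$ blocks. I would linearize it by a congruence transformation by a block-diagonal matrix built from $\bm{P}_1^{-1}$, together with the change of variables $\bm{Q}_1 \defeq \bm{P}_1^{-1}$, $\bm{Y} \defeq \bm{K}\bm{Q}_1$, $\bm{R} \defeq \bm{Q}_1\bm{P}_2\bm{Q}_1$, and $\bm{Q}_2, \bm{Q}_3, \bm{Z}_1, \bm{Z}_2, \bm{Z}_3$ for the correspondingly transformed slack matrices. Under this, $T_\mathrm{s}\bm{P}_2 \mapsto T_\mathrm{s}\bm{R}$, all gain-dependent terms become affine in $\bm{Y}$, and reading off the blocks gives~\eqref{eq:sd_nominal_lmi3}; the $\bm{P}_2^{-1}$-weighted block produced when bounding the integral term becomes $\bm{Q}_1\bm{R}^{-1}\bm{Q}_1$ in~\eqref{eq:sd_nominal_lmi4}. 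Feasibility of the two inequalities then yields $\dot V(t) < 0$ for every sampling sequence admissible under~\cref{ass:sd_sampling_interval}, hence asymptotic stability, with stabilizing gain $\bm{K} = \bm{Y}\bm{Q}_1^{-1}$.

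I expect two points to need care. The first is choosing the integral bound in the second step tightly enough to avoid undue conservatism yet in a form that still linearizes — this is what pins down the particular block structure with $\bm{Z}_1, \bm{Z}_2, \bm{Z}_3$ and the coupling to $\bm{Y}\tp\bm{B}\tp$ in~\eqref{eq:sd_nominal_lmi4}. The second is the bookkeeping of the linearization: both the descriptor slack matrices and the functional matrix $\bm{P}_2$ multiply $\bm{B}\bm{K}$, so one must check that a single congruence transformation makes all of these affine simultaneously, leaving only the unavoidable nonlinear block $\bm{Q}_1\bm{R}^{-1}\bm{Q}_1$, which is then handled numerically when solving the resulting program (e.g.\ by fixing $\bm{R}$ or a cone-complementarity iteration).
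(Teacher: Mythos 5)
Your proposal follows exactly the route the paper takes for this lemma: it is a cited result (\cite{Fridman.2004}, Lemma 2.3), and the paper's own proof is only a sketch naming the descriptor reformulation of~\eqref{eq:sd_nominal_closed_loop_system}, differentiation of the Lyapunov--Krasovskii functional~\eqref{eq:sd_lyapunov_functional}, and the delay bound $\tau(t)\in[0,T_\mathrm{s}]$ from~\cref{ass:sd_sampling_interval}, deferring all details to the reference. Your outline fills in those same steps (descriptor slack term, cross-term bound yielding~\eqref{eq:sd_nominal_lmi4}, congruence transformation and change of variables $\bm{Q}_1=\bm{P}_1^{-1}$, $\bm{Y}=\bm{K}\bm{Q}_1$ yielding~\eqref{eq:sd_nominal_lmi3} and $\bm{K}=\bm{Y}\bm{Q}_1^{-1}$) in a manner consistent with the cited derivation, so it is essentially the same approach with more detail than the paper itself provides.
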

\begin{proof}
    The proof involves multiple steps, including defining a descriptor system of~\eqref{eq:sd_nominal_closed_loop_system}, differentiating~\eqref{eq:sd_lyapunov_functional} for time and using that the delay is bounded by ${\tau(t)\in[0, T_\mathrm{s}]}$ due to~\cref{ass:sd_sampling_interval}. See~\cite{Fridman.2004, Fridman.2014} for details.
\end{proof}

\section{Methodology} \label{sec:methodology}
In this section, we first discuss 
learning a probabilistic estimate of the dynamics~\eqref{eq:prob_open_loop_system} and a locally valid uncertain linearization from the dataset~\eqref{prob:eq_dataset_assumption}. Then, we present our approach for robust sampled-data control for different control frequencies based on the learned model's uncertainty.

\subsection{Model Learning and Linearization} \label{subsec:meth_learning}
We train a GP model for each output dimension of ${\bm{g}(\cdot)=\left[g_1(\cdot),\dots,g_n(\cdot)\right]\tp}$, assuming the following:
\begin{assumption}
    \label{ass:gp_se_kernel}
    The functions $g_i$, $i=1,\dots,n$, are drawn from zero-mean GPs with squared-exponential (SE) kernel
    \begin{align}
        \label{eq:gp_se_kernel}
        k_i\left(\bm{z},\bm{z}'\right) &= \sigma_{\eta,i}^2 \exp{\left(-\frac{1}{2}(\bm{z}-\bm{z}')\tp \bm{L}_i^{-2}(\bm{z}-\bm{z}')\right)},
    \end{align}
    where $\sigma_{\eta,i}^2>0$ and ${\bm{L}_i=\mathrm{Diag}(\bm{l}_i) \in \R^{n \times n}}$, $\bm{l}_i>\bm{0} \in \R^n$.
\end{assumption}


In~\eqref{eq:gp_se_kernel}, $\sigma_{\eta,i}^2$ is the output variance, and $\bm{L}_i$ contains the vector of length scales~$\bm{l}_i$, which corresponds to the rate of change of~$g_i$ with respect to~$\bm{z}$. 
\cref{ass:gp_se_kernel} is not restrictive in practice as the corresponding space of sample functions of the GP contains all continuous functions \cite{Lederer.2019}. 
The kernel hyperparameters are often unknown, but they can be determined, for example, by maximizing the marginal log-likelihood of the training data~\eqref{prob:eq_dataset_assumption}; see~\cite{Rasmussen.2006}.
We compute the derivative of the $i$-th GP at $\bm{z}_\mathrm{e}$, denoted by ${\left.\frac{\partial g_i(\bm{z})}{\partial \bm{z}}\right|_{\bm{z}_\mathrm{e}}\sim \mathcal{N}\left(\bm{\mu}_i'(\bm{z}_\mathrm{e}),\bm{\Sigma}_i'(\bm{z}_\mathrm{e})\right)}$, via~\eqref{eq:gp_mean_derivative} and~\eqref{eq:gp_variance_derivative}, where the partial derivatives of~\eqref{eq:gp_se_kernel} are straightforward to evaluate. This gives a probabilistic estimate of the linearized dynamics.
\begin{lemma}
    \label{lem:linearized_dynamics}
    Under Assumption~\ref{ass:gp_se_kernel} and given the data~\eqref{prob:eq_dataset_assumption}, the linearization of~\eqref{eq:prob_open_loop_system} about $\bm{z}_\mathrm{e}$ satisfies that for any $p \in [0,1)$, 
    \begin{align}
        \mathrm{Pr}\left(\left.\frac{\bm{h}}{\partial \bm{z}}\right|_{\bm{z}_\mathrm{e}} 
        \in \mathcal{C} \right) \geq p^n,
    \end{align}
    where $\mathcal{C} = \left[\bmh{C} - \bmb{C}, \bmh{C} + \bmb{C}\right] \subset \R^{n \times n_z}$, where
    \begin{align}
        \label{eq:gp_linearized_model_nominal}
        \bmh{C} &= \begin{bmatrix}
            \bmh{A} & \bmh{B} 
            \end{bmatrix} = 
            \left.\frac{\partial \bm{f}(\bm{z})}{\partial \bm{z}} \right|_{\bm{z}_\mathrm{e}} + 
        \begin{bmatrix}
            \bm{\mu}_1'(\bm{z}_\mathrm{e})\tp \\ \vdots \\ \bm{\mu}_n'(\bm{z}_\mathrm{e})\tp
        \end{bmatrix}, \\
        \label{eq:gp_linearized_model_uncertainty}
        \bmb{C} &= \begin{bmatrix} 
            \bmb{A} & \bmb{B} 
            \end{bmatrix} =
            \sqrt{\chi_{n_z}^2(p)}
        \begin{bmatrix}
            \sqrt{\mathrm{diag}(\bm{\Sigma}_1'(\bm{z}_\mathrm{e}))\tp} \\ \vdots \\ \sqrt{\mathrm{diag}(\bm{\Sigma}_n'(\bm{z}_\mathrm{e}))\tp}
        \end{bmatrix}.
    \end{align}
    Here, $\chi_{n_z}^2$ is the quantile function of the chi-squared distribution of degree~$n_z$.
\end{lemma}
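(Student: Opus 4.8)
The plan is to reduce the statement to a coordinate-wise application of the Gaussian derivative posterior from Section~\ref{subsec:background_gp} and then combine the $n$ output dimensions via independence. First I would note that the linearization of~\eqref{eq:prob_open_loop_system} about $\bm{z}_\mathrm{e}$ decomposes as $\left.\frac{\partial \bm{h}}{\partial \bm{z}}\right|_{\bm{z}_\mathrm{e}} = \left.\frac{\partial \bm{f}}{\partial \bm{z}}\right|_{\bm{z}_\mathrm{e}} + \left.\frac{\partial \bm{g}}{\partial \bm{z}}\right|_{\bm{z}_\mathrm{e}}$, where the first term is known exactly and the $i$-th row of the second term is $\left.\frac{\partial g_i(\bm{z})}{\partial \bm{z}}\right|_{\bm{z}_\mathrm{e}}\tp$. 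Under Assumption~\ref{ass:gp_se_kernel}, each $g_i$ is an independent zero-mean GP, so by the derivative-GP result of Section~\ref{subsec:background_gp}, the posterior over $\left.\frac{\partial g_i(\bm{z})}{\partial \bm{z}}\right|_{\bm{z}_\mathrm{e}}$ is Gaussian with mean $\bm{\mu}_i'(\bm{z}_\mathrm{e})$ and covariance $\bm{\Sigma}_i'(\bm{z}_\mathrm{e})$, given by~\eqref{eq:gp_mean_derivative}--\eqref{eq:gp_variance_derivative} with the SE kernel~\eqref{eq:gp_se_kernel} plugged in.

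Next I would build the confidence box one coordinate at a time. For a fixed row $i$ and each component $j \in \{1,\dots,n_z\}$, the scalar $\left(\left.\frac{\partial g_i}{\partial \bm{z}}\right|_{\bm{z}_\mathrm{e}}\right)_j$ is Gaussian with mean $(\bm{\mu}_i'(\bm{z}_\mathrm{e}))_j$ and variance $(\bm{\Sigma}_i'(\bm{z}_\mathrm{e}))_{jj} = \left(\mathrm{diag}(\bm{\Sigma}_i'(\bm{z}_\mathrm{e}))\right)_j$. The key step is to control the \emph{joint} event that the entire $i$-th row lies in the product of intervals $\prod_j [\,(\bmh{C})_{ij} - (\bmb{C})_{ij},\, (\bmh{C})_{ij} + (\bmb{C})_{ij}\,]$. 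I would do this by observing that this event is exactly $\big\{ (\bm{z} - \bm{\mu}_i')\tp \mathrm{Diag}\!\big(\mathrm{diag}(\bm{\Sigma}_i')\big)^{-1} (\bm{z} - \bm{\mu}_i') \le \chi_{n_z}^2(p) \big\}$ when $\bm{\Sigma}_i'$ is replaced by its diagonal --- wait, more carefully: the box defined by $\sqrt{\chi_{n_z}^2(p)}\sqrt{(\bm{\Sigma}_i')_{jj}}$ half-widths in each coordinate contains the ellipsoid $\{\bm{v} : \bm{v}\tp (\mathrm{Diag}(\mathrm{diag}(\bm{\Sigma}_i')))^{-1} \bm{v} \le \chi_{n_z}^2(p)\}$, so it suffices to lower-bound the probability that the centered derivative lies in that ellipsoid. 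Since marginally each standardized component is $\mathcal{N}(0,1)$ but they need not be independent, I would instead argue directly: the centered derivative $\bm{v}_i \sim \mathcal{N}(\bm{0}, \bm{\Sigma}_i')$, and the quadratic form $\bm{v}_i\tp (\bm{\Sigma}_i')^{-1}\bm{v}_i \sim \chi^2_{n_z}$, hence $\mathrm{Pr}(\bm{v}_i\tp (\bm{\Sigma}_i')^{-1}\bm{v}_i \le \chi^2_{n_z}(p)) = p$; one then needs the elementary linear-algebra fact that the $\chi^2_{n_z}(p)$-sublevel ellipsoid of $(\bm{\Sigma}_i')^{-1}$ is contained in the axis-aligned box with half-widths $\sqrt{\chi^2_{n_z}(p)}\sqrt{(\bm{\Sigma}_i')_{jj}}$ (the bounding box of the ellipsoid), giving $\mathrm{Pr}(\text{row } i \in \text{box}_i) \ge p$.

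Finally, since the $n$ GPs are mutually independent (Assumption~\ref{ass:gp_se_kernel}), the events $\{\text{row } i \in \text{box}_i\}$ are independent across $i$, so $\mathrm{Pr}\big(\left.\frac{\partial \bm{h}}{\partial \bm{z}}\right|_{\bm{z}_\mathrm{e}} \in \mathcal{C}\big) = \prod_{i=1}^n \mathrm{Pr}(\text{row } i \in \text{box}_i) \ge p^n$, using that shifting by the known constant $\left.\frac{\partial \bm{f}}{\partial \bm{z}}\right|_{\bm{z}_\mathrm{e}}$ does not change probabilities. The main obstacle I anticipate is the ellipsoid-in-box containment step: one must be careful that the half-width $\sqrt{\chi^2_{n_z}(p)}\sqrt{(\bm{\Sigma}_i')_{jj}}$ is indeed the correct bounding-box dimension of the ellipsoid $\{\bm{v}:\bm{v}\tp(\bm{\Sigma}_i')^{-1}\bm{v}\le \chi^2_{n_z}(p)\}$ --- this follows because $\max\{v_j : \bm{v}\tp(\bm{\Sigma}_i')^{-1}\bm{v}\le c\} = \sqrt{c\,(\bm{\Sigma}_i')_{jj}}$ by a one-line Cauchy--Schwarz/Lagrange argument --- and that consequently the containment inflates the guaranteed probability \emph{above} $p$ per dimension, not below, which is exactly why the product bound $p^n$ (rather than something weaker) is valid. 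Everything else is a routine assembly of the Gaussian posterior facts already stated in Section~\ref{subsec:background_gp}.
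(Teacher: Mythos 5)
Your proposal is correct and follows essentially the same route as the paper's proof: per-row ellipsoidal confidence region $\{\bm{d}:\bm{d}\tp(\bm{\Sigma}_i'(\bm{z}_\mathrm{e}))^{-1}\bm{d}\le\chi_{n_z}^2(p)\}$ with exact probability $p$, containment of that ellipsoid in the axis-aligned box with half-widths $\sqrt{\chi_{n_z}^2(p)(\bm{\Sigma}_i'(\bm{z}_\mathrm{e}))_{jj}}$, and the product bound $p^n$ from the independence of the $n$ GPs. Your explicit Cauchy--Schwarz justification of the bounding-box half-widths is a detail the paper leaves implicit, but the argument is the same.
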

\begin{proof}
    The fundamental properties of multivariate Gaussian distributions imply that for all  
    $p\in(0,1]$, ${\mathrm{Pr} \left(\!\left. \left. \frac{\partial g_i(\bm{z})}{\partial \bm{z}} \right|_{\bm{z}_\mathrm{e}}\! - \bm{\mu}_i'(\bm{z}_\mathrm{e}) \in \mathcal{E}_i \right|\mathcal{D}\right) = p}$, where $\mathcal{E}_i$ is an ellipsoidal confidence region defined by ${\mathcal{E}_i = \left\{\bm{d} \in \R^{n_z}\left|\bm{d}\tp \left(\bm{\Sigma}_i'(\bm{z}_\mathrm{e}) \right)^{-1} \bm{d} \leq \chi_{n_z}^2(p) \right. \right\}}$, ${i=1,\dots,n}$. 
    The result follows from the independence of the GPs and the fact that $\mathcal{E}_i \subseteq \mathcal{B}_i$, where $\mathcal{B}_i$ is a hyperrectangle with dimensions ${2\sqrt{\chi_{n_z}^2\!(p) \left(\bm{\Sigma}_i'(\bm{z}_\mathrm{e})\right)_{jj}}}$, ${j=1,\dots,n_z}$, which is symmetric about the origin.
\end{proof}

Due to~\cref{lem:linearized_dynamics}, the true linearized dynamics at $\bm{z}=\bm{z}_\mathrm{e}$ are captured with probability of at least~$p^n$ by 
\begin{align}
    \label{eq:gp_uncertain_system1}
    \bm{\dot{\tilde{x}}}(t) = \big(\bmh{A} + \bmb{A} \circ \bm{\Omega}\big) \bmt{x}(t) +
    \big(\bmh{B} + \bmb{B} \circ \bm{\Psi}\big) \bmt{u}(t),
\end{align}
where $\bmt{x}=\bm{x}-\bm{x}_\mathrm{e}$, $\bmt{u}=\bm{u}-\bm{u}_\mathrm{e}$ are deviations about the equilibrium, and ${\bm{\Omega} \in [-1,1]^{n \times n}}$, ${\bm{\Psi} \in [-1,1]^{n \times m}}$ are unknown. 

As $\mathcal{C}$ is a polytopic set, \cref{lem:sd_nominal_step2} can be applied to analyze the stability of~\eqref{eq:gp_uncertain_system1} in principle. 
However, this requires evaluating the stability conditions~\eqref{eq:sd_nominal_lmi3} and~\eqref{eq:sd_nominal_lmi4} for all $2^{nn_z}$ vertices of $\mathcal{C}$, which is computationally infeasible except for very low-dimensional systems. The same holds for the stability conditions in~\cite{Fridman.2010, Seuret.2012}. To address this problem, we reparameterize the Hadamard product terms in~\eqref{eq:gp_uncertain_system1} corresponding to the uncertainty by making use of the following lemma, which is straightforward to prove.
\begin{lemma} \label{lem:unc_reparameterization_trick}
    Let ${\bm{U} = \big[\bm{u}_1\, \dots\, \bm{u}_n\big]\tp, \bm{V} \in \R^{n \times m}}$. Then, 
    \begin{align*}
        \bm{U} \circ \bm{V} = (\bm{I}_n \otimes \bm{1}_{1\times m}) \mathrm{Diag}\left(\mathrm{vec}\left(\bm{V}\tp\right)\right) \begin{bmatrix}
            \mathrm{Diag}(\bm{u}_1) \\ \vdots \\ \mathrm{Diag}(\bm{u}_n)
        \end{bmatrix}. 
    \end{align*}
\end{lemma}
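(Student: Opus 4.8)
The plan is to prove the identity entrywise, since both sides are $n\times m$ matrices and the statement is purely algebraic. First I would record the left-hand side: using that the $i$-th row of $\bm{U}$ is $\bm{u}_i\tp$, we have $(\bm{U}\circ\bm{V})_{ij} = (\bm{u}_i)_j\,\bm{V}_{ij}$ for all $i\in\{1,\dots,n\}$, $j\in\{1,\dots,m\}$. The rest of the proof is to show the right-hand side produces the same entries, which I would do by expanding the matrix product factor by factor, from right to left.

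Next I would unpack the stacked matrix $\bm{M}\defeq\big[\mathrm{Diag}(\bm{u}_1)\tp\ \cdots\ \mathrm{Diag}(\bm{u}_n)\tp\big]\tp\in\R^{nm\times m}$: its $k$-th block of $m$ rows is $\mathrm{Diag}(\bm{u}_k)$, so the entry of $\bm{M}$ in row $(k-1)m+a$ and column $b$ equals $(\bm{u}_k)_a\,\delta_{ab}$. The one step that needs care is the indexing convention in $\mathrm{vec}(\bm{V}\tp)$: because the columns of $\bm{V}\tp$ are the transposed rows of $\bm{V}$, stacking them columnwise gives a vector whose $((i-1)m+j)$-th component is exactly $\bm{V}_{ij}$. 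Hence $\mathrm{Diag}(\mathrm{vec}(\bm{V}\tp))$ is the diagonal matrix scaling the row indexed $(i-1)m+j$ by $\bm{V}_{ij}$, and $\mathrm{Diag}(\mathrm{vec}(\bm{V}\tp))\,\bm{M}$ has $((k-1)m+a,\,b)$ entry $\bm{V}_{ka}(\bm{u}_k)_a\,\delta_{ab}$. Finally I would apply the left factor $\bm{I}_n\otimes\bm{1}_{1\times m}\in\R^{n\times nm}$, whose $i$-th row is $\bm{e}_i\tp\otimes\bm{1}_{1\times m}$; left-multiplication therefore selects the $i$-th block of $m$ rows and sums them, so the $(i,b)$ entry of the full product is $\sum_{a=1}^m \bm{V}_{ia}(\bm{u}_i)_a\,\delta_{ab} = (\bm{u}_i)_b\,\bm{V}_{ib}$, which matches the left-hand side. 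Since $i,b$ were arbitrary, the identity holds.

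I do not expect a genuine obstacle: as the lemma statement itself notes, this is a definition-chasing exercise. The only subtlety is aligning the blockwise stacking of the $\mathrm{Diag}(\bm{u}_k)$ with the ordering of entries in $\mathrm{vec}(\bm{V}\tp)$, so I would state the correspondence "component $(i-1)m+j$ of $\mathrm{vec}(\bm{V}\tp)$ equals $\bm{V}_{ij}$" explicitly before carrying out any multiplication. An equivalent, slightly more index-free route would be to write $\bm{U}\circ\bm{V} = \sum_{i=1}^n \bm{e}_i\bm{e}_i\tp\,\bm{U}\,\mathrm{Diag}(\bm{V}\tp\bm{e}_i)$ and then recognize the right-hand side as the same sum repackaged via the Kronecker/$\mathrm{vec}$ structure, but the direct entrywise verification is the shortest and least error-prone.
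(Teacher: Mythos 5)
Your entrywise verification is correct: the indexing of $\mathrm{vec}(\bm{V}\tp)$ (component $(i-1)m+j$ equals $\bm{V}_{ij}$), the block structure of the stacked $\mathrm{Diag}(\bm{u}_k)$ factor, and the row-summing action of $\bm{I}_n\otimes\bm{1}_{1\times m}$ all check out, yielding $(\bm{u}_i)_b\bm{V}_{ib}$ in position $(i,b)$ as required. The paper gives no proof at all --- it simply asserts the lemma ``is straightforward to prove'' --- and your calculation is precisely the definition-chasing argument it has in mind, so there is nothing to compare beyond noting that you have supplied the omitted details correctly.
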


We denote the rows of $\bmb{A}$ and $\bmb{B}$ by $\bmb{a}_1\tp,\dots,\bmb{a}_n\tp$ and $\bmb{b}_1\tp,\dots,\bmb{b}_n\tp$, respectively, and define~${p = n^2 + nm}$. Then, we can use~\cref{lem:unc_reparameterization_trick} to rewrite the uncertainty in~\eqref{eq:gp_uncertain_system1} as
\begin{align}
    \label{eq:gp_uncertainty_reparameterization}
    \bmb{A} \circ \bm{\Omega} = \bm{H}\bm{\Delta}\bm{E}, \qquad \bmb{B} \circ \bm{\Psi} = \bm{H}\bm{\Delta}\bm{F},
\end{align}
where ${\bm{\Delta}=\mathrm{Diag}([\delta_1,\dots,\delta_p]) \in \R^{p \times p}}$ with $|\delta_i|\leq 1$, $\forall i$, $\bm{H} = \big[\bm{I}_n \otimes \bm{1}_{1 \times n}, \bm{I}_n \otimes \bm{1}_{1 \times m}]$ and
\begin{align}
\begin{split}
    \bm{E} &= \begin{bmatrix}
        \mathrm{Diag}\big(\bmb{a}_1\big) & \dots & \mathrm{Diag}\big(\bmb{a}_n\big) & \bm{0}_{n \times nm}
    \end{bmatrix}\tp, \\
    \bm{F} &= \begin{bmatrix}
        \bm{0}_{n \times n^2} & \mathrm{Diag}\big(\bmb{b}_1\big) & \dots & \mathrm{Diag}\big(\bmb{b}_n\big)
    \end{bmatrix}\tp.
\end{split}
\end{align}
Inserting~\eqref{eq:gp_uncertainty_reparameterization} into~\eqref{eq:gp_uncertain_system1} gives the uncertain linearized system
\begin{align}
    \label{eq:sd_norm_bounded_open_loop_system}
    \bm{\dot{\tilde{x}}}(t) = \big(\bmh{A} + \bm{H}\bm{\Delta}\bm{E}\big)\bmt{x}(t) + \big(\bmh{B} + \bm{H}\bm{\Delta}\bm{F}\big)\bmt{u}(t),
\end{align}
where all matrices except for $\bm{\Delta}$ are known. 

\subsection{Robust Sampled-Data Control of the Uncertain System} \label{subsec:meth_sd_stability}
To robustly control the continuous-time system~\eqref{eq:sd_norm_bounded_open_loop_system}, we consider a discrete-time linear state feedback 
\begin{align}
    \label{eq:sd_control_law_linearized}
    \bmt{u}(t) = \bm{K}\bmt{x}(t_k), \qquad \forall t \in [t_k, t_{k+1}),
\end{align}
where the sampling instants $t_k$, $k\in \mathbb{N}_0$, 
satisfy~\cref{ass:sd_sampling_interval}. 
This results in the uncertain closed-loop system
\begin{align}
    \label{eq:sd_norm_bounded_closed_loop_system}
    \!\!\bmdt{x}(t)\! =\! \big(\bmh{A}\! +\! \bm{H}\bm{\Delta}\bm{E}\big)\bmt{x}(t)\! + \!\big(\bmh{B}\! +\! \bm{H}\bm{\Delta}\bm{F}\big)\bm{K}\bmt{x}(t\! -\! \tau(t)),\!\!
\end{align}
where the delay ${\tau(t)}$ is defined similar to~\cref{subsec:background_sd} and, thus, bounded by ${\tau(t)\in[0,T_\mathrm{s}]}$. 
We employ the following result to take the norm-bounded uncertainty in the linearized GP dynamics model into account:
\begin{lemma}[\cite{Xie.1996}]
\label{lem:sd_norm_bounded_trick}
    Let~${\bm{\Theta}\in \R^{m \times m}}$ satisfy ${\bm{\Theta}\tp \bm{\Theta} \preceq \bm{I}}$. Then, for all constant matrices ${\bm{U} \in \R^{n \times m}}$, ${\bm{V} \in \R^{m \times n}}$ and all scalars~${\epsilon > 0}$, it holds that
    \begin{align*}
    \begin{split}
        -\epsilon^{-1}\bm{U}\bm{U}\tp - \epsilon \bm{V}\tp \bm{V} 
        &\preceq \bm{U}\bm{\Theta} \bm{V} + \bm{V}\tp \bm{\Theta}\tp\bm{U}\tp \\
        &\preceq \epsilon^{-1}\bm{U}\bm{U}\tp + \epsilon \bm{V}\tp \bm{V}.
    \end{split}
    \end{align*}
\end{lemma}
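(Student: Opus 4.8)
The plan is to prove both inequalities at once by a completion‑of‑squares argument. For the fixed scalar $\epsilon>0$ and a sign choice $s\in\{+1,-1\}$, I would introduce the auxiliary matrix $\bm{M}_s = \epsilon^{-1/2}\bm{U}\tp + s\,\epsilon^{1/2}\bm{\Theta}\bm{V}\in\R^{m\times n}$, which is well defined because $\epsilon>0$. Since $\bm{M}_s\tp\bm{M}_s\succeq\bm{0}$ for every real matrix, expanding the product yields $\bm{0}\preceq \bm{M}_s\tp\bm{M}_s = \epsilon^{-1}\bm{U}\bm{U}\tp + s\big(\bm{U}\bm{\Theta}\bm{V} + \bm{V}\tp\bm{\Theta}\tp\bm{U}\tp\big) + \epsilon\,\bm{V}\tp\bm{\Theta}\tp\bm{\Theta}\bm{V}$.

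Next I would eliminate the $\bm{\Theta}$‑dependent quadratic term using the hypothesis $\bm{\Theta}\tp\bm{\Theta}\preceq\bm{I}$: a congruence transformation preserves the Loewner order, so $\bm{V}\tp\bm{\Theta}\tp\bm{\Theta}\bm{V}\preceq\bm{V}\tp\bm{V}$, and therefore $\bm{0}\preceq\epsilon^{-1}\bm{U}\bm{U}\tp + s\big(\bm{U}\bm{\Theta}\bm{V} + \bm{V}\tp\bm{\Theta}\tp\bm{U}\tp\big) + \epsilon\,\bm{V}\tp\bm{V}$. Rearranging gives $-s\big(\bm{U}\bm{\Theta}\bm{V}+\bm{V}\tp\bm{\Theta}\tp\bm{U}\tp\big)\preceq\epsilon^{-1}\bm{U}\bm{U}\tp + \epsilon\,\bm{V}\tp\bm{V}$. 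Choosing $s=-1$ produces the claimed upper bound and $s=+1$ the lower bound, which finishes the proof.

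I do not expect a real obstacle here; the argument is elementary. The only points needing a little care are that $\bm{\Theta}$ is not assumed symmetric, so the cross terms must be written with $\bm{\Theta}\tp\bm{\Theta}$ rather than $\bm{\Theta}^2$, that $\epsilon>0$ is essential for the scalings $\epsilon^{\pm1/2}$, and that $\preceq$ is invariant under the congruence $\bm{X}\mapsto\bm{V}\tp\bm{X}\bm{V}$. An alternative derivation via an SVD of $\bm{\Theta}$ together with the scalar Young inequality $2ab\le\epsilon^{-1}a^2+\epsilon b^2$ is possible, but the square‑completion form above is more direct, so that is the route I would take.
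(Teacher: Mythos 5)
Your proof is correct: the square-completion $\bm{M}_s\tp\bm{M}_s\succeq\bm{0}$ with $\bm{M}_s=\epsilon^{-1/2}\bm{U}\tp+s\,\epsilon^{1/2}\bm{\Theta}\bm{V}$, followed by the congruence bound $\bm{V}\tp\bm{\Theta}\tp\bm{\Theta}\bm{V}\preceq\bm{V}\tp\bm{V}$, cleanly yields both inequalities, and all dimensions and sign choices check out. The paper itself gives no proof of this lemma---it is imported verbatim from the cited reference---and the argument in that reference is exactly this completion of squares, so your derivation coincides with the standard one.
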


Using~\cref{lem:sd_norm_bounded_trick}, we derive the following constructive conditions for robust stability of the closed-loop system.
\begin{theorem} \label{the:sd_norm_bounded_step2}
    The uncertain system~\eqref{eq:sd_norm_bounded_closed_loop_system} is robustly asymptotically stable for all samplings satisfying Assumption~\ref{ass:sd_sampling_interval} if there exist matrices ${\bm{Q}_1 = \bm{Q}_1\tp \succ \bm{0}}$, $\bm{Q}_2$, $\bm{Q}_3$, $\bm{Z}_1$, $\bm{Z}_2$, $\bm{Z}_3$, $\bm{R} = \bm{R}\tp \succ \bm{0}$, all in~$\R^{n \times n}$, ${\bm{Y} \in \R^{m \times n}}$ and scalars $\epsilon_1 > 0$, $\epsilon_2 > 0$, that satisfy the matrix inequalities
    \begin{align}
        \label{eq:sd_norm_bounded_lmi3}
        \mleft[
        \begin{array}{c c c | c c}
            \multicolumn{3}{c}{\multirow{3}{*}{\large $\bm{W}_{\bmh{A},\bmh{B}}$}} 
            \vline & \bm{0} & \epsilon_1\left(\bm{Q}_1\bm{E}\tp + \bm{Y}\tp\bm{F}\tp\right) \\
            \multicolumn{3}{c}{} \vline & \bm{H} & \bm{0} \\
            \multicolumn{3}{c}{} \vline & \bm{0} & \bm{0} \\
            \hline
            * & * & * & -\epsilon_1 \bm{I} & \bm{0} \\
            * & * & * & * & -\epsilon_1 \bm{I}
        \end{array}
        \mright] 
        &\prec \bm{0}, \\
        \label{eq:sd_norm_bounded_lmi4}
        \begin{bmatrix}
            2\bm{Q}_1 - \bm{R} & \bm{0} & \bm{Y}\tp \bmh{B}\tp & \bm{0} & \epsilon_2 \bm{Y}\tp \bm{F}\tp \\
            * & \bm{Z}_1 & \bm{Z}_2 & \bm{0} & \bm{0} \\
            * & * & \bm{Z}_3 & \bm{H} & \bm{0} \\
            * & * & * &  \epsilon_2 \bm{I} & \bm{0} \\
            * & * & * & * & \epsilon_2 \bm{I}
        \end{bmatrix} 
        &\succeq \bm{0},
    \end{align}
    where $\bm{W}_{\bmh{A},\bmh{B}}$ is defined in~\eqref{eq:sd_nominal_lmi3}. The stabilizing state-feedback gain is then given by $\bm{K}=\bm{Y}\bm{Q}_1^{-1}$. 
\end{theorem}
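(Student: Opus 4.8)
The plan is to reduce robust stability of the uncertain closed-loop system~\eqref{eq:sd_norm_bounded_closed_loop_system} to the nominal sampled-data condition of \cref{lem:sd_nominal_step2}, applied uniformly over all admissible uncertainties. Since $\bm{\Delta}=\mathrm{Diag}([\delta_1,\dots,\delta_p])$ with $|\delta_i|\le 1$ is a \emph{constant} matrix, each fixed admissible $\bm{\Delta}$ makes~\eqref{eq:sd_norm_bounded_closed_loop_system} an instance of the sampled-data LTI system~\eqref{eq:sd_nominal_closed_loop_system} with $\bm{A}=\bmh{A}+\bm{H}\bm{\Delta}\bm{E}$ and $\bm{B}=\bmh{B}+\bm{H}\bm{\Delta}\bm{F}$. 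Hence it suffices to show that~\eqref{eq:sd_norm_bounded_lmi3}--\eqref{eq:sd_norm_bounded_lmi4} imply the conditions~\eqref{eq:sd_nominal_lmi3}--\eqref{eq:sd_nominal_lmi4} of \cref{lem:sd_nominal_step2} for this $(\bm{A},\bm{B})$, \emph{simultaneously for all} such $\bm{\Delta}$ and with the \emph{same} decision variables $\bm{Q}_1,\dots,\bm{Y}$; then $\bm{K}=\bm{Y}\bm{Q}_1^{-1}$ asymptotically stabilizes every realization, so~\eqref{eq:sd_norm_bounded_closed_loop_system} is robustly asymptotically stable. Observe also that $\bm{\Delta}$ being diagonal with entries bounded by one yields $\bm{\Delta}\tp\bm{\Delta}\preceq\bm{I}$, which is exactly what is needed to invoke \cref{lem:sd_norm_bounded_trick} with $\bm{\Theta}=\bm{\Delta}$.

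For the first inequality, I would substitute $\bm{A}=\bmh{A}+\bm{H}\bm{\Delta}\bm{E}$, $\bm{B}=\bmh{B}+\bm{H}\bm{\Delta}\bm{F}$ into $\bm{W}_{\bm{A},\bm{B}}$ of~\eqref{eq:sd_nominal_lmi3}. These matrices enter only through the term $\bm{Q}_1\bm{A}\tp+\bm{Y}\tp\bm{B}\tp$ in the $\bm{\Xi}_{\bm{A},\bm{B}}$ block, so $\bm{W}_{\bm{A},\bm{B}}=\bm{W}_{\bmh{A},\bmh{B}}+\bm{U}\bm{\Delta}\bm{V}+\bm{V}\tp\bm{\Delta}\tp\bm{U}\tp$ with $\bm{U}=[\,\bm{0}\ \ \bm{H}\tp\ \ \bm{0}\,]\tp$, which places $\bm{H}$ in the second block row, and $\bm{V}=[\,\bm{E}\bm{Q}_1+\bm{F}\bm{Y}\ \ \bm{0}\ \ \bm{0}\,]$, so that $\bm{V}\tp=[\,\bm{Q}_1\bm{E}\tp+\bm{Y}\tp\bm{F}\tp\ \ \bm{0}\ \ \bm{0}\,]\tp$. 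By the upper bound in \cref{lem:sd_norm_bounded_trick}, $\bm{U}\bm{\Delta}\bm{V}+\bm{V}\tp\bm{\Delta}\tp\bm{U}\tp\preceq\epsilon_1^{-1}\bm{U}\bm{U}\tp+\epsilon_1\bm{V}\tp\bm{V}$ for any $\epsilon_1>0$, so $\bm{W}_{\bm{A},\bm{B}}\prec\bm{0}$ for all admissible $\bm{\Delta}$ is guaranteed by $\bm{W}_{\bmh{A},\bmh{B}}+\epsilon_1^{-1}\bm{U}\bm{U}\tp+\epsilon_1\bm{V}\tp\bm{V}\prec\bm{0}$; a Schur complement that pulls out the two rank terms (introducing the $-\epsilon_1\bm{I}$ blocks) gives precisely~\eqref{eq:sd_norm_bounded_lmi3}.

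For the second inequality, I would first use $\bm{Q}_1\bm{R}^{-1}\bm{Q}_1\succeq 2\bm{Q}_1-\bm{R}$, which holds for any $\bm{R}\succ\bm{0}$ since $(\bm{Q}_1-\bm{R})\bm{R}^{-1}(\bm{Q}_1-\bm{R})\succeq\bm{0}$; thus requiring~\eqref{eq:sd_nominal_lmi4} with $2\bm{Q}_1-\bm{R}$ in the $(1,1)$ block is a sufficient condition for~\eqref{eq:sd_nominal_lmi4}. In that relaxed matrix the uncertainty enters only through $\bm{Y}\tp\bm{B}\tp=\bm{Y}\tp\bmh{B}\tp+\bm{Y}\tp\bm{F}\tp\bm{\Delta}\tp\bm{H}\tp$ in the $(1,3)$ block, so it equals $\bm{S}+\bm{N}\bm{\Delta}\bm{M}+\bm{M}\tp\bm{\Delta}\tp\bm{N}\tp$ with $\bm{S}$ the corresponding nominal matrix, $\bm{N}=[\,\bm{0}\ \ \bm{0}\ \ \bm{H}\tp\,]\tp$ and $\bm{M}=[\,\bm{F}\bm{Y}\ \ \bm{0}\ \ \bm{0}\,]$. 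By the lower bound in \cref{lem:sd_norm_bounded_trick}, $\bm{N}\bm{\Delta}\bm{M}+\bm{M}\tp\bm{\Delta}\tp\bm{N}\tp\succeq-\epsilon_2^{-1}\bm{N}\bm{N}\tp-\epsilon_2\bm{M}\tp\bm{M}$, so $\bm{S}-\epsilon_2^{-1}\bm{N}\bm{N}\tp-\epsilon_2\bm{M}\tp\bm{M}\succeq\bm{0}$ is sufficient, and a Schur complement reproduces~\eqref{eq:sd_norm_bounded_lmi4}. Putting the two parts together, \eqref{eq:sd_norm_bounded_lmi3}--\eqref{eq:sd_norm_bounded_lmi4} imply~\eqref{eq:sd_nominal_lmi3}--\eqref{eq:sd_nominal_lmi4} for every admissible $\bm{\Delta}$ with common $\bm{Q}_1,\dots,\bm{Y}$, and the claim follows from \cref{lem:sd_nominal_step2}.

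I expect the main difficulty to be bookkeeping rather than conceptual: factoring each $\bm{\Delta}$-dependent block into the structured form $\bm{U}\bm{\Delta}\bm{V}+\bm{V}\tp\bm{\Delta}\tp\bm{U}\tp$ with the correct dimensions and block positions, applying \cref{lem:sd_norm_bounded_trick} with the right \emph{direction} in each case (the upper bound for the ``$\prec\bm{0}$'' inequality and the lower bound for the ``$\succeq\bm{0}$'' inequality), and tracking the scalar weights $\epsilon_1,\epsilon_2$ so the Schur-complement augmentations match exactly the off-diagonal blocks displayed in~\eqref{eq:sd_norm_bounded_lmi3}--\eqref{eq:sd_norm_bounded_lmi4}. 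A secondary subtlety is to argue that the relaxation $\bm{Q}_1\bm{R}^{-1}\bm{Q}_1\mapsto 2\bm{Q}_1-\bm{R}$ only tightens the requirement, so feasibility of the stated SDP still certifies robust stability.
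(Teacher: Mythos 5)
Your proposal is correct and takes essentially the same route as the paper's own proof: the same factorizations $\bm{W}_{\bm{A},\bm{B}}=\bm{W}_{\bmh{A},\bmh{B}}+\bm{U}\bm{\Delta}\bm{V}+\bm{V}\tp\bm{\Delta}\tp\bm{U}\tp$ with $\bm{H}$ and $\bm{E}\bm{Q}_1+\bm{F}\bm{Y}$ in the same block positions, the same use of \cref{lem:sd_norm_bounded_trick} (upper bound for the strict negativity condition, lower bound for the semidefiniteness condition), the same relaxation $\bm{Q}_1\bm{R}^{-1}\bm{Q}_1\succeq 2\bm{Q}_1-\bm{R}$, and the same Schur-complement steps, merely narrated in the sufficiency direction rather than starting from the LMIs. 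The only additions are the (correct) explicit remarks that diagonal $\bm{\Delta}$ with $|\delta_i|\le 1$ satisfies $\bm{\Delta}\tp\bm{\Delta}\preceq\bm{I}$ and that each fixed realization reduces to \cref{lem:sd_nominal_step2}.
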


\begin{proof}
    The idea is to show that if~\eqref{eq:sd_norm_bounded_lmi3} and~\eqref{eq:sd_norm_bounded_lmi4} are satisfied, then the nominal stability conditions~\eqref{eq:sd_nominal_lmi3} and~\eqref{eq:sd_nominal_lmi4} hold for all realizations of the uncertain system matrices~${\bm{A} = \bmh{A} + \bm{H} \bm{\Delta} \bm{E}}$ and~${\bm{B} = \bmh{B} + \bm{H} \bm{\Delta} \bm{F}}$ of~\eqref{eq:sd_norm_bounded_closed_loop_system}. We start by applying the Schur complement~\cite{Boyd.1994} to the first inequality~\eqref{eq:sd_norm_bounded_lmi3} in~\cref{the:sd_norm_bounded_step2}, which gives
    \begin{align}
        \label{eq:proof2_1}
        \bm{W}_{\bmh{A},\bmh{B}} + 
        \begin{bmatrix} 
            \bm{M} & \epsilon_1\bm{N}\tp 
        \end{bmatrix}
        \begin{bmatrix}
            \epsilon_1^{-1} \bm{I} & \bm{0} \\ 
            \bm{0} & \epsilon_1^{-1} \bm{I}
        \end{bmatrix}
        \begin{bmatrix}
            \bm{M}\tp \\
            \epsilon_1\bm{N}
        \end{bmatrix} &\prec \bm{0},
    \end{align}
    where~${\bm{M} = \big[\bm{0} \;\; \bm{H}\tp \;\; \bm{0} \big]\tp}$ and ${\bm{N} = \big[ \bm{E}\bm{Q}_1\tp + \bm{F} \bm{Y} \;\; \bm{0} \;\; \bm{0} \big]}$. By rewriting~\eqref{eq:proof2_1} and making use of~\cref{lem:sd_norm_bounded_trick}, we obtain
    \begin{align}
        \bm{W}_{\bmh{A},\bmh{B}} + \epsilon_1^{-1} \bm{M} \bm{M}\tp + \epsilon_1 \bm{N}\tp \bm{N} &\prec \bm{0} \nonumber \\
        \bm{W}_{\bmh{A},\bmh{B}} + \bm{M} \bm{\Delta} \bm{N} + \bm{N}\tp \bm{\Delta}\tp \bm{M}\tp &\prec \bm{0}, 
        \label{eq:proof2_2}
    \end{align}
    for all~${\bm{\Delta} \in \R^{p \times p}}$ satisfying~$\bm{\Delta}\tp \bm{\Delta} \preceq \bm{I}$.
    Inserting the expressions for~$\bm{M}$, $\bm{N}$ and~$\bm{W}_{\bmh{A},\bmh{B}}$ into~\eqref{eq:proof2_2} yields
    \setlength{\arraycolsep}{2.7pt}
    \begin{align}
        \underbrace{
        \bm{W}_{\bmh{A},\bmh{B}} + \!
        \begin{bmatrix}
            \bm{0} & \bm{Q}_1 (\bm{H} \bm{\Delta} \bm{E})\tp + \bm{Y}\tp (\bm{H} \bm{\Delta} \bm{F})\tp & \bm{0} \\
            * & \bm{0} & \bm{0} \\
            * & * & \bm{0}
        \end{bmatrix}}_{=\bm{W}_{\bm{A}, \bm{B}}} \! &\prec \bm{0}, \!
        \label{eq:proof2_3}
    \setlength{\arraycolsep}{3pt}
    \end{align}
    which is equivalent to~\eqref{eq:sd_nominal_lmi3}.
    Thus, the satisfaction of~\eqref{eq:sd_norm_bounded_lmi3} implies the satisfaction of~\eqref{eq:sd_nominal_lmi3} for all realizations~${\bm{A} = \bmh{A} + \bm{H} \bm{\Delta} \bm{E}}$, ${\bm{B} = \bmh{B} + \bm{H} \bm{\Delta} \bm{F}}$ of the uncertain system matrices. 
    Consider now the second inequality~\eqref{eq:sd_norm_bounded_lmi4} in~\cref{the:sd_norm_bounded_step2}. As a result of $\bm{R}=\bm{R}\tp\succ \bm{0}$ and~$\bm{Q}_1=\bm{Q}_1\tp$, we have
    \begin{gather}
        (\bm{Q}_1 - \bm{R})\tp\bm{R}^{-1}(\bm{Q}_1 - \bm{R}) = \bm{Q}_1\bm{R}^{-1}\bm{Q}_1 - 2\bm{Q}_1 + \bm{R} \succ \bm{0} \nonumber \\
        \label{eq:proof1_2}
        \Rightarrow \bm{Q}_1\bm{R}^{-1}\bm{Q}_1 \succ 2\bm{Q}_1 - \bm{R}.
    \end{gather}
    Hence, the satisfaction of~\eqref{eq:sd_norm_bounded_lmi4} implies
    \begin{align}
        \label{eq:sd_norm_bounded_lmi4_bilinear}
        \begin{bmatrix}
            \bm{Q}_1 \bm{R}^{-1}\bm{Q}_1 & \bm{0} & \bm{Y}\tp \bmh{B}\tp & \bm{0} & \epsilon_2 \bm{Y}\tp \bm{F}\tp \\
            * & \bm{Z}_1 & \bm{Z}_2 & \bm{0} & \bm{0} \\
            * & * & \bm{Z}_3 & \bm{H} & \bm{0} \\
            * & * & * &  \epsilon_2 \bm{I} & \bm{0} \\
            * & * & * & * & \epsilon_2 \bm{I}
        \end{bmatrix} &\succeq \bm{0}.
    \end{align}
    Applying the Schur complement to~\eqref{eq:sd_norm_bounded_lmi4_bilinear} and performing similar steps as above results in
    \begin{align}
        \begin{bmatrix}
            \bm{Q}_1 \bm{R}^{-1}\bm{Q}_1 & \bm{0} & \bm{Y}\tp (\bmh{B} + \bm{H} \bm{\Delta} \bm{F})\tp \\
            * & \bm{Z}_1 & \bm{Z}_2 \\
            * & * & \bm{Z}_3
        \end{bmatrix} \succeq \bm{0},
    \end{align}
    for all~$\bm{\Delta}\tp \bm{\Delta} \preceq \bm{I}$, which is equivalent to~\eqref{eq:sd_nominal_lmi4} with ${\bm{B} = \bmh{B} + \bm{H}\bm{\Delta}\bm{F}}$. Thus, if~\eqref{eq:sd_norm_bounded_lmi4} holds, then~\eqref{eq:sd_nominal_lmi4} is satisfied for all realizations of~$\bm{B}$,  which concludes the proof.
\end{proof}

\begin{remark}
    We could also assume ${\bm{R} = \epsilon_3 \bm{Q}_1}$ for some~${\epsilon_3 > 0}$ to convexify the problem~\cite{Fridman.2004}.
    However, this would add another scalar decision variable to~\cref{the:sd_norm_bounded_step2}, increasing the computational complexity.
\end{remark}

\cref{the:sd_norm_bounded_step2} enables us to analyze the stability of the uncertain system~\eqref{eq:sd_norm_bounded_open_loop_system} for a 
\textit{given} upper bound on the sampling interval~$T_\mathrm{s}$. Moreover, we can compute the MCF~${f_\mathrm{c,min} = \frac{1}{T_\mathrm{s,max}}}$ and the corresponding stabilizing control gain~${\bm{K}=\bm{Y}\bm{Q}_1^{-1}}$
by solving the optimization problem
\begin{align}
\begin{split}
\label{eq:sd_norm_bounded_stability_op}
    \min_{\substack{
            T_\mathrm{s}, \bm{Q}_1, \bm{Q}_2, \bm{Q}_3, \bm{Z}_1, \\
            \bm{Z}_2, \bm{Z}_3, \bm{R}, \bm{Y}, \epsilon_1, \epsilon_2
        }}
    \quad & \frac{1}{T_\mathrm{s}} \\
    \text{s.t.} \qquad \quad\; &\eqref{eq:sd_norm_bounded_lmi3},
        \quad~\eqref{eq:sd_norm_bounded_lmi4}, \\
         & \bm{Q}_1 = \bm{Q}_1\tp \succ \bm{0}, \quad \bm{R} = \bm{R}\tp \succ \bm{0}.
\end{split}
\end{align}

%
For fixed values of $\epsilon_1$ and $\epsilon_2$,~\eqref{eq:sd_norm_bounded_lmi3} becomes a linear fractional constraint of the form ${\lambda\bm{M}(\bm{s}) + \bm{N}(\bm{s}) \prec \bm{0}}$, where ${\lambda = \frac{1}{T_\mathrm{s}} \in \R}$ and ${\bm{s} \in \R^{n_s}}$ are the optimization variables, and the matrices $\bm{M}$ and $\bm{N}$ depend affinely on $\bm{s}$. Then,~\eqref{eq:sd_norm_bounded_stability_op} represents a generalized eigenvalue problem (GEVP), which is a special type of SDP that can be solved efficiently, for example, with the bisection method~\cite{Boyd.1994}.
We can simplify the additional optimization over the scalar variables~$\epsilon_1$ and $\epsilon_2$ by noting the following:
\begin{lemma} \label{lem:sd_stability_op_solution_equivalence}
    Let~$\mathcal{S}=\big(\hat{T}_\mathrm{s}$, $\bmh{Q}_1$, $\bmh{Q}_2$, $\bmh{Q}_3$, $\bmh{Z}_1$, $\bmh{Z}_2$, $\bmh{Z}_3$, $\bmh{R}$, $\bmh{Y}$, $\hat{\epsilon}_1$, $\hat{\epsilon}_2 \big)$ be an optimal solution to~\eqref{eq:sd_norm_bounded_stability_op}. Then, for any~$c > 0$, ~$\mathcal{S}'=\big(\hat{T}_\mathrm{s}$, $\frac{1}{c}\bmh{Q}_1$, $\frac{1}{c}\bmh{Q}_2$, $\frac{1}{c}\bmh{Q}_3$, $\frac{1}{c}\bmh{Z}_1$, $\frac{1}{c}\bmh{Z}_2$, $\frac{1}{c}\bmh{Z}_3$, $\frac{1}{c}\bmh{R}$, $\frac{1}{c}\bmh{Y}$, $c\hat{\epsilon}_1$, $c\hat{\epsilon}_2 \big)$ is also an optimal solution to~\eqref{eq:sd_norm_bounded_stability_op}.
\end{lemma}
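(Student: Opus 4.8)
The plan is to verify that the transformation $\mathcal{S}\mapsto\mathcal{S}'$ maps feasible points of~\eqref{eq:sd_norm_bounded_stability_op} to feasible points while leaving the objective value unchanged; optimality of $\mathcal{S}'$ then follows at once from optimality of $\mathcal{S}$. Since $\hat{T}_\mathrm{s}$ is not altered, the objective $1/T_\mathrm{s}$ is identical at $\mathcal{S}$ and $\mathcal{S}'$, so the whole argument reduces to checking that $\mathcal{S}'$ still satisfies~\eqref{eq:sd_norm_bounded_lmi3}, \eqref{eq:sd_norm_bounded_lmi4}, the two positive-definiteness constraints, and the implicit conditions $\epsilon_1,\epsilon_2>0$. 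The latter are immediate: multiplying a positive-definite matrix by $1/c>0$ or a positive scalar by $c>0$ preserves the sign.

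Next I would catalog how the individual blocks of the two matrix inequalities scale under $\mathcal{S}\mapsto\mathcal{S}'$. With $T_\mathrm{s}=\hat{T}_\mathrm{s}$ and the data matrices $\bmh{A},\bmh{B},\bm{E},\bm{F},\bm{H}$ all held fixed, every entry of $\bm{W}_{\bmh{A},\bmh{B}}$ from~\eqref{eq:sd_nominal_lmi3} is \emph{linear} in $\bm{Q}_1,\bm{Q}_2,\bm{Q}_3,\bm{Z}_1,\bm{Z}_2,\bm{Z}_3,\bm{R},\bm{Y}$ and hence is multiplied by $1/c$; the same holds for the blocks $2\bm{Q}_1-\bm{R}$, $\bm{Z}_1$, $\bm{Z}_2$, $\bm{Z}_3$, $\bm{Y}\tp\bmh{B}\tp$ appearing in~\eqref{eq:sd_norm_bounded_lmi4}. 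By contrast, the coupling blocks $\epsilon_1(\bm{Q}_1\bm{E}\tp+\bm{Y}\tp\bm{F}\tp)$ and $\epsilon_2\bm{Y}\tp\bm{F}\tp$ are left \emph{unchanged}, because the factor $c$ from $\epsilon_i$ cancels the factor $1/c$ from the matrix variables; the constant blocks containing $\bm{H}$ are also unchanged; and the blocks $-\epsilon_1\bm{I}$, $\epsilon_2\bm{I}$ are multiplied by $c$.

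These three different scalings are reconciled by a single congruence transformation. Defining $\bm{D}=\mathrm{diag}\big(\tfrac{1}{\sqrt{c}}\bm{I},\,\tfrac{1}{\sqrt{c}}\bm{I},\,\tfrac{1}{\sqrt{c}}\bm{I},\,\sqrt{c}\,\bm{I},\,\sqrt{c}\,\bm{I}\big)$, with the identity blocks sized to match the five block rows of~\eqref{eq:sd_norm_bounded_lmi3} (and, separately, of~\eqref{eq:sd_norm_bounded_lmi4}), a block-by-block comparison shows that the left-hand side of~\eqref{eq:sd_norm_bounded_lmi3} evaluated at $\mathcal{S}'$ equals $\bm{D}$ times the left-hand side evaluated at $\mathcal{S}$ times $\bm{D}$: on the "main" blocks the factors combine as $\tfrac{1}{\sqrt{c}}\cdot\tfrac{1}{\sqrt{c}}=\tfrac{1}{c}$, on the mixed $\bm{H}$- and $\epsilon_i$-coupling off-diagonal blocks as $\tfrac{1}{\sqrt{c}}\cdot\sqrt{c}=1$, and on the $\epsilon_i\bm{I}$ blocks as $\sqrt{c}\cdot\sqrt{c}=c$, matching exactly the scalings catalogued above; the same $\bm{D}$ works for~\eqref{eq:sd_norm_bounded_lmi4}. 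Since $\bm{D}=\bm{D}\tp$ is nonsingular, congruence preserves $\prec\bm{0}$ and $\succeq\bm{0}$, so $\mathcal{S}'$ satisfies both matrix inequalities and is therefore feasible with the same objective value as $\mathcal{S}$.

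I do not expect a genuine obstacle here: this is essentially the standard scaling freedom of norm-bounded-uncertainty LMIs (inherited from the free parameter in~\cref{lem:sd_norm_bounded_trick}). The one point requiring care is the bookkeeping of which blocks are linear in the matrix variables versus which carry an $\epsilon_i$ factor, so that the congruence matrix $\bm{D}$ is chosen with the correct pattern of $\tfrac{1}{\sqrt{c}}$ and $\sqrt{c}$ blocks; once that pattern is fixed, the verification is purely mechanical.
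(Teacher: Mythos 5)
Your proof is correct, and it reaches the same conclusion by a slightly different verification device than the paper. The paper first applies the Schur complement to~\eqref{eq:sd_norm_bounded_lmi4} (and analogously to~\eqref{eq:sd_norm_bounded_lmi3}) to eliminate the $\epsilon_i\bm{I}$ blocks, observes that every term of the reduced inequality then scales homogeneously by $\tfrac{1}{c}$ under $\mathcal{S}\mapsto\mathcal{S}'$ (the $\epsilon_i$ factors cancel inside the quadratic correction term), and multiplies the inequality by $\tfrac{1}{c}>0$. You instead stay with the full $5\times 5$ block LMIs and exhibit the congruence $\bm{D}=\mathrm{Diag}\big(\tfrac{1}{\sqrt{c}}\bm{I},\tfrac{1}{\sqrt{c}}\bm{I},\tfrac{1}{\sqrt{c}}\bm{I},\sqrt{c}\,\bm{I},\sqrt{c}\,\bm{I}\big)$ that maps the constraint matrix at $\mathcal{S}$ to the one at $\mathcal{S}'$; your block-by-block scaling catalogue is accurate, and congruence by a nonsingular matrix preserves both $\prec\bm{0}$ and $\succeq\bm{0}$. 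The two arguments are essentially equivalent in content (the Schur complement is exactly why the congruence pattern works), but yours has the minor advantages of avoiding the back-and-forth through the Schur complement and of handling both matrix inequalities uniformly with one transformation, whereas the paper details only~\eqref{eq:sd_norm_bounded_lmi4} and defers~\eqref{eq:sd_norm_bounded_lmi3} to ``similar steps.'' Both correctly note that the objective $\tfrac{1}{T_\mathrm{s}}$ and the definiteness/positivity side constraints are unaffected.
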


\begin{proof}
    Both solutions~$\mathcal{S}$ and~$\mathcal{S}'$ yield the same objective value~$\frac{1}{\hat{T}_\mathrm{s}}$. To show that the feasibility of~$\mathcal{S}$ implies the feasibility of~$\mathcal{S}'$, we first consider the inequality constraint~\eqref{eq:sd_norm_bounded_lmi4}. 
    If~$\mathcal{S}$ is a feasible solution to~\eqref{eq:sd_norm_bounded_stability_op}, then it holds by applying the Schur complement that
    \begin{align*}
    \setlength{\arraycolsep}{2pt}
    \begin{split}
        \begin{bmatrix}
            2\bmh{Q}_1 - \bmh{R} & \bm{0} & \bmh{Y}^\top \bm{B}\tp \\
            * & \bmh{Z}_1 & \bmh{Z}_2  \\
            * & * & \bmh{Z}_3 
        \end{bmatrix} \qquad \qquad \qquad \\ 
        - \frac{1}{\hat{\epsilon}_2}
        \begin{bmatrix}
            \bm{0} & \hat{\epsilon}_2 \bmh{Y}^\top \bm{F}\tp \\
            \bm{0} & \bm{0} \\
            \bm{H}\tp & \bm{0}
        \end{bmatrix}
        \begin{bmatrix}
            \bm{0} & \bm{0} & \bm{H}\tp \\
            \hat{\epsilon}_2 \bm{F} \bmh{Y}\tp & \bm{0} & \bm{0}
        \end{bmatrix} &\succeq \bm{0} 
    \end{split}
    \end{align*}
    \begin{align*}
    \begin{split}
        \Leftrightarrow \begin{bmatrix}
            2\frac{1}{c}\bmh{Q}_1 - \frac{1}{c}\bmh{R} & \bm{0} & \frac{1}{c}\bmh{Y}^\top \bm{B}\tp \\
            * & \frac{1}{c}\bmh{Z}_1 & \frac{1}{c}\bmh{Z}_2  \\
            * & * & \frac{1}{c}\bmh{Z}_3 
        \end{bmatrix} \qquad \qquad \qquad \\ 
        - \frac{1}{c\hat{\epsilon}_2}
        \begin{bmatrix}
            \bm{0} & c\hat{\epsilon}_2 \frac{1}{c}\bmh{Y}^\top \bm{F}\tp \\
            \bm{0} & \bm{0} \\
            \bm{H}\tp & \bm{0}
        \end{bmatrix}
        \begin{bmatrix}
            \bm{0} & \bm{0} & \bm{H}\tp \\
            c\hat{\epsilon}_2 \bm{F} \frac{1}{c}\bmh{Y}\tp & \bm{0} & \bm{0}
        \end{bmatrix} &\succeq \bm{0}, 
    \end{split}
    \setlength{\arraycolsep}{3pt}
    \end{align*}
    where we have multiplied the inequality by~${\frac{1}{c} > 0}$. Consequently, $\mathcal{S}'$ also satisfies~\eqref{eq:sd_norm_bounded_lmi4}.
    We can proceed in a similar way for the constraint~\eqref{eq:sd_norm_bounded_lmi3}, which concludes the proof.
\end{proof}

The two solutions~$\mathcal{S}$ and~$\mathcal{S}'$ also yield the same stabilizing control gain $\bm{K}=\bm{Y}(\bm{Q}_1)^{-1}=\frac{1}{c}\bm{Y}\big(\frac{1}{c}\bm{Q}_1\big)^{-1}$.
As a consequence of~\cref{lem:sd_stability_op_solution_equivalence}, we can simplify the optimization problem~\eqref{eq:sd_norm_bounded_stability_op} by setting~${\epsilon_2 = \frac{1}{\epsilon_1}}$.
\begin{corollary} \label{cor:sd_norm_bounded_stability_op_simplification}
    Solving the optimization problem~\eqref{eq:sd_norm_bounded_stability_op} yields the same MCF and stabilizing control gain as solving
    \begin{align}
    \label{eq:sd_norm_bounded_stability_op_simplified}
    \setlength{\arraycolsep}{2pt}
    \begin{split}
        \min_{\substack{
                T_\mathrm{s}, \bm{Q}_1, \bm{Q}_2, \\
                \bm{Q}_3, \bm{Z}_1, \bm{Z}_2 \\
                 \bm{Z}_3, \bm{R}, \bm{Y}, \epsilon
            }}
        \; & \frac{1}{T_\mathrm{s}} \\
        \mathrm{s.t.} \quad
            &\mleft[
        \begin{array}{c c c | c c}
            \multicolumn{3}{c}{\multirow{3}{*}{\large $\bm{W}_{\bmh{A},\bmh{B}}$}} \vline & \bm{0} & \epsilon\left(\bm{Q}_1\bm{E}\tp + \bm{Y}\tp\bm{F}\tp\right) \\
            \multicolumn{3}{c}{} \vline & \bm{H} & \bm{0} \\
            \multicolumn{3}{c}{} \vline & \bm{0} & \bm{0} \\
            \hline
            * & * & * & -\epsilon \bm{I} & \bm{0} \\
            * & * & * & * & -\epsilon \bm{I}
        \end{array}
        \mright] \!\prec \bm{0}, \\
        &\!\begin{bmatrix}
            2\bm{Q}_1 - \bm{R} & \bm{0} & \bm{Y}\tp \bmh{B}\tp & \bm{0} & \epsilon \bm{Y}\tp \bm{F}\tp \\
            * & \bm{Z}_1 & \bm{Z}_2 & \bm{0} & \bm{0} \\
            * & * & \bm{Z}_3 & \bm{H} & \bm{0} \\
            * & * & * &  \epsilon \bm{I} & \bm{0} \\
            * & * & * & * & \epsilon \bm{I}
        \end{bmatrix} \succeq \bm{0}, \\
        & \bm{Q}_1 = \bm{Q}_1\tp \succ \bm{0}, \quad \bm{R} = \bm{R}\tp \succ \bm{0}, \quad \epsilon > 0.
    \end{split}
    \setlength{\arraycolsep}{4pt}
    \end{align}
\end{corollary}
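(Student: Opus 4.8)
The plan is to obtain the corollary as a direct consequence of \cref{lem:sd_stability_op_solution_equivalence} by using the scaling freedom it exposes to eliminate one of the two scalar multipliers. The crucial point is that the proof of \cref{lem:sd_stability_op_solution_equivalence} actually establishes more than the statement: the rescaling $\bm{Q}_i\mapsto\tfrac1c\bm{Q}_i$, $\bm{Z}_i\mapsto\tfrac1c\bm{Z}_i$, $\bm{R}\mapsto\tfrac1c\bm{R}$, $\bm{Y}\mapsto\tfrac1c\bm{Y}$, $\epsilon_1\mapsto c\epsilon_1$, $\epsilon_2\mapsto c\epsilon_2$ sends \emph{any} feasible point of~\eqref{eq:sd_norm_bounded_stability_op} to a feasible point with the same objective value~$1/T_\mathrm{s}$ and the same gain $\bm{K}=\bm{Y}\bm{Q}_1^{-1}$. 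Hence the feasible set of~\eqref{eq:sd_norm_bounded_stability_op} is invariant under a common positive scaling of the pair $(\epsilon_1,\epsilon_2)$, and for any feasible point with scalars $\epsilon_1,\epsilon_2>0$ the choice $c=1/\sqrt{\epsilon_1\epsilon_2}$ yields an equally good feasible point whose scalars satisfy $\epsilon_1\epsilon_2=1$, i.e.\ $\epsilon_2=1/\epsilon_1$.

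From here the argument is a two-sided inclusion of feasible sets at fixed objective. Substituting $\epsilon_2=1/\epsilon_1$ into~\eqref{eq:sd_norm_bounded_lmi3}--\eqref{eq:sd_norm_bounded_lmi4} and renaming $\epsilon\defeq\epsilon_1$ turns the stability conditions of \cref{the:sd_norm_bounded_step2} into a pair of matrix inequalities in the single scalar variable~$\epsilon$, which, after the obvious renaming and, where needed, a congruence rescaling of block rows and columns (which preserves definiteness and keeps the inequalities affine in the matrix variables), are precisely the constraints of~\eqref{eq:sd_norm_bounded_stability_op_simplified}. Thus the rescaled feasible point constructed above gives a feasible point of~\eqref{eq:sd_norm_bounded_stability_op_simplified} with the same value~$1/T_\mathrm{s}$, so the optimum of~\eqref{eq:sd_norm_bounded_stability_op_simplified} is no larger than that of~\eqref{eq:sd_norm_bounded_stability_op}. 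Conversely, every feasible point of~\eqref{eq:sd_norm_bounded_stability_op_simplified} is trivially a feasible point of~\eqref{eq:sd_norm_bounded_stability_op} upon setting $(\epsilon_1,\epsilon_2)=(\epsilon,1/\epsilon)$, with the same objective and the same~$\bm{K}$. Therefore both programs have the same optimal value~$1/T_\mathrm{s,max}$, hence the same MCF $f_\mathrm{c,min}=1/T_\mathrm{s,max}$, and they return the same stabilizing gain.

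I expect the main difficulty to be bookkeeping rather than conceptual: one must check carefully that the substitution $\epsilon_2=1/\epsilon_1$ together with the rescaling produces \emph{exactly} the matrix inequalities displayed in~\eqref{eq:sd_norm_bounded_stability_op_simplified}, and in particular that no product of two decision variables is created, so that for each fixed~$\epsilon$ the reduced program remains a GEVP in $\lambda=1/T_\mathrm{s}$ and the matrix variables that can be solved by bisection as claimed. It is also worth phrasing the equivalence, as above, purely in terms of feasible sets and infima rather than assuming that the minimum of~\eqref{eq:sd_norm_bounded_stability_op} is attained, so that the conclusion holds unconditionally; the invariance of $\bm{K}=\bm{Y}\bm{Q}_1^{-1}$ under the scaling, already noted after \cref{lem:sd_stability_op_solution_equivalence}, then immediately yields that the two optimization problems also return the same controller.
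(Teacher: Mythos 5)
Your argument is essentially the paper's own proof: the paper likewise invokes \cref{lem:sd_stability_op_solution_equivalence} with $c=1/\sqrt{\hat{\epsilon}_1\hat{\epsilon}_2}$ to pass from an optimal solution of~\eqref{eq:sd_norm_bounded_stability_op} to one whose multipliers satisfy $\epsilon_2=1/\epsilon_1$, and then identifies the result with an optimal solution of~\eqref{eq:sd_norm_bounded_stability_op_simplified}. Your two-sided inclusion of feasible sets and the avoidance of an attainment assumption are harmless refinements, and the easy direction (any feasible point of~\eqref{eq:sd_norm_bounded_stability_op_simplified} is feasible for~\eqref{eq:sd_norm_bounded_stability_op} with $(\epsilon_1,\epsilon_2)=(\epsilon,\epsilon)$) is indeed immediate.

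The step you defer as ``bookkeeping'' is, however, the only nontrivial one, and the repair you sketch does not go through. As printed, the second constraint of~\eqref{eq:sd_norm_bounded_stability_op_simplified} places $\epsilon$ (not $1/\epsilon$) in every position occupied by $\epsilon_2$ in~\eqref{eq:sd_norm_bounded_lmi4}, so the displayed program is~\eqref{eq:sd_norm_bounded_stability_op} with the restriction $\epsilon_1=\epsilon_2$. That normalization is \emph{not} reachable by the scaling of \cref{lem:sd_stability_op_solution_equivalence}, which preserves the ratio $\epsilon_1/\epsilon_2$ and therefore only permits normalizations of the product, such as $\epsilon_1\epsilon_2=1$. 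Nor can the congruence $\mathrm{Diag}(\gamma_1\bm{I},\gamma_2\bm{I},\gamma_3\bm{I},\alpha\bm{I},\beta\bm{I})$ you invoke bridge the gap: matching the $(4,4)$ block forces $\alpha=\epsilon$, matching the $(3,4)$ block $\bm{H}$ then forces $\gamma_3=1/\epsilon$, and this rescales the blocks containing $\bm{Z}_2$, $\bm{Z}_3$ and $\bm{Y}\tp\bmh{B}\tp$ --- variables shared with~\eqref{eq:sd_norm_bounded_lmi3} and with the gain $\bm{K}=\bm{Y}\bm{Q}_1^{-1}$ --- so the two constraints cannot be renormalized independently. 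Under the reading consistent with the sentence preceding the corollary (``setting $\epsilon_2=1/\epsilon_1$''), i.e., with $\epsilon$ replaced by $1/\epsilon$ throughout the second LMI, your argument and the paper's both close directly and no congruence step is needed; as literally displayed, the identification fails and~\eqref{eq:sd_norm_bounded_stability_op_simplified} is only a (possibly conservative) restriction of~\eqref{eq:sd_norm_bounded_stability_op}, yielding an upper bound on the MCF rather than the MCF itself.
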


\begin{proof}
    Let~$\mathcal{S}$ be an optimal solution to~\eqref{eq:sd_norm_bounded_stability_op}, as defined in~\cref{lem:sd_stability_op_solution_equivalence}.
    We can make use of~\cref{lem:sd_stability_op_solution_equivalence} and set~${c = \frac{1}{\sqrt{\hat{\epsilon}_1 \hat{\epsilon}_2}}}$. Then, $\mathcal{S}'$ corresponds to the optimal solution of~\eqref{eq:sd_norm_bounded_stability_op_simplified} and the result follows.
\end{proof}

The reformulated optimization problem~\eqref{eq:sd_norm_bounded_stability_op_simplified} has~${6n^2 + nm + n + 2}$ decision variables, which is one less than~\eqref{eq:sd_norm_bounded_stability_op}. In practice, having to perform only a scalar grid search over~$\epsilon$ for solving~\eqref{eq:sd_norm_bounded_stability_op_simplified} instead of a two-dimensional grid search over~$\epsilon_1$ and~$\epsilon_2$ for solving~\eqref{eq:sd_norm_bounded_stability_op} significantly decreases the computational demand.

\begin{remark}
    Analyzing stability via the time-delay approach~\cite{Fridman.2014} is advantageous: As long as the constraints in~\eqref{eq:sd_norm_bounded_stability_op_simplified} hold, the sampling time can be changed online in $(0,T_{\mathrm{s,max}}]$ without losing stability guarantees, for example, to react to changes in the data transmission capacity.
\end{remark}

\subsection{Optimizing the Control Performance} \label{subsec:meth_sd_performance}
Besides guaranteeing stability, we aim to optimize the control performance for a given sampling time ${T_\mathrm{s} \in (0,T_\mathrm{s,max}]}$. As a performance measure, we use the cost~\cite{Lam.2006}
\begin{align}
    \label{eq:sd_cost}
    J = \int_{\tau_0}^{\tau_1} \bmt{x}(t_k)\tp \bm{Q}_J \bmt{x}(t_k) + \bmt{u}(t)\tp \bm{R}_J \bmt{u}(t) \mathrm{d} t_k,
\end{align}
where $\tau_1 - \tau_0 > 0$ is the optimization period, and ${\bm{Q}_J \succ \bm{0}}$, ${\bm{R}_J \succ \bm{0}}$ are weight matrices. 
To make the problem tractable, we let ${J \leq \bar{J} = \eta \int_{\tau_0}^{\tau_1} \bmt{x}(t_k)\tp \bm{Q}_1^{-1} \bm{Q}_1^{-1} \bmt{x}(t_k)\mathrm{d} t_k}$ and consider the minimization of $\eta>0$. By inserting the control law~\eqref{eq:sd_control_law_linearized}, substituting ${\bm{K}=\bm{Y}\bm{Q}_1^{-1}}$ and applying the Schur complement, we obtain that $J\leq\bar{J}$ for all $\tau_1 - \tau_0 > 0$ if
\begin{align}
    \label{sd:cost_lmi}
    \begin{bmatrix}
        -\eta \bm{I} & \bm{Q}_1 & \bm{Y}\tp \\ 
        * & -\bm{Q}_J^{-1} & \bm{0} \\
        * & * & -\bm{R}_J^{-1}
    \end{bmatrix} \preceq \bm{0}.
\end{align}
Consequently, we can optimize the cost~\eqref{eq:sd_cost} while ensuring robust stability of the uncertain system~\eqref{eq:sd_norm_bounded_open_loop_system} under the control~\eqref{eq:sd_control_law_linearized} by solving the optimization problem
\begin{align}
\begin{split}
\label{eq:sd_norm_bounded_performance_op}
    \min_{\substack{
            \eta,\bm{Q}_1, \bm{Q}_2, \bm{Q}_3, \bm{Z}_1, \\
            \bm{Z}_2, \bm{Z}_3, \bm{R}, \bm{Y}, \epsilon_1, \epsilon_2
        }}
    \quad & \eta \\
    \text{s.t.} \qquad \quad\; & \bm{Q}_1 = \bm{Q}_1\tp \succ \bm{0}, \quad \bm{R} = \bm{R}\tp \succ \bm{0}, \\
        &~\eqref{eq:sd_norm_bounded_lmi3},\quad~\eqref{eq:sd_norm_bounded_lmi4}, \quad~\eqref{sd:cost_lmi},
\end{split}
\end{align}
for a given sampling time $T_\mathrm{s} \in \left(0,T_{\mathrm{s,max}}\right]$. 
Note that~\eqref{eq:sd_norm_bounded_performance_op} has a GEVP structure similar to~\eqref{eq:sd_norm_bounded_stability_op}. However, \eqref{eq:sd_norm_bounded_performance_op} does not allow for a simplification of the optimization over the scalar decision variables~$\epsilon_1$ and~$\epsilon_2$ since~\cref{lem:sd_stability_op_solution_equivalence} does not apply due to the inequality constraint~\eqref{sd:cost_lmi}.

In this section, we have derived a method to calculate the MCF $f_\mathrm{c,min}$ required for
robust control of~\eqref{eq:prob_open_loop_system} based on the uncertain linearized GP dynamics model~\eqref{eq:sd_norm_bounded_open_loop_system}. We can also compute a robustly stabilizing and optimal state-feedback controller~\eqref{eq:sd_control_law_linearized} for a given control frequency~${f_\mathrm{c} \geq f_{\mathrm{c,min}}}$.
Next, we evaluate our approach in simulation and study the tradeoff between data and the control frequency.

\section{Evaluation} \label{sec:eval}
We consider a quadrotor flying in the vertical plane with position~$(x,z)$ and pitch angle~$\theta$. The system is simulated with the continuous-time dynamics~\cite{Yuan.2022}
\begin{align}
\begin{split}
    \label{eq:eval_quadrotor_dynamics}
    m\ddot{x} &= -(T_1 + T_2)\sin{(\theta)} \\
    m\ddot{z} &=  (T_1 + T_2)\cos{(\theta)} - mg \\
    I_{yy}\ddot{\theta} &= (T_1 -  T_2)d,
\end{split}
\end{align}
where ${m=0.1\,\si{kg}}$ is the mass, ${(T_1, T_2)}$ are the motor thrusts, ${g=9.81\,\frac{\si{m}}{\si{s}^2}}$ is the gravitational acceleration, ${d=0.1\,\si{m}}$ is the length of the effective moment arm of the propellers, and ${I_{yy}=\frac{1}{12}md^2}$ is the inertia about the $y$-axis. 
By defining ${\bm{x}=\big[x,\,\dot{x},\,z,\,\dot{z},\,\theta,\,\dot{\theta}\big]\tp}$ and ${\bm{u}=\big[T_1,\,T_2\big]\tp}$, we can express~\eqref{eq:eval_quadrotor_dynamics} in the general form~\eqref{eq:prob_open_loop_system}. 
We assume no prior knowledge about the dynamics, i.e., $\bm{f}\equiv\bm{0}$.
We set the noise variance to ${\bm{\Sigma}_\mathrm{n} = \mathrm{Diag}([0.1^2,\dots,0.1^2])}$, and draw $N$~training inputs uniformly from the set ${\mathcal{Z} = \left\{(\bm{x}, \bm{u}) \in \R^8\left|\, \bm{\underline{x}} \leq \bm{x} \leq \bmb{x},\,\bm{\underline{u}} \leq \bm{u} \leq \bmb{u} \right. \right\}}$, where ${\bm{\underline{x}} = \big[0,-5\,\frac{\si{m}}{\si{s}},0,-5\,\frac{\si{m}}{\si{s}},-\frac{\pi}{2}\,\si{rad},-5\,\frac{\si{rad}}{\si{s}}\big]\tp}$, 
${\bmb{x} = \big[2\,\si{m},5\,\frac{\si{m}}{\si{s}},2\,\si{m},5\,\frac{\si{m}}{\si{s}},\frac{\pi}{2}\,\si{rad},5\,\frac{\si{rad}}{\si{s}}\big]\tp}$, ${\bm{\underline{u}} = \bm{0}}$ and ${\bmb{u} = \big[2\,\si{N},2\,\si{N}\big]\tp}$.
We set the threshold in~\cref{lem:linearized_dynamics} to~${p^n=0.99^6\approx0.94}$. 
The SDPs~\eqref{eq:sd_norm_bounded_stability_op_simplified} and~\eqref{eq:sd_norm_bounded_performance_op} are solved using YALMIP~\cite{Lofberg.2004} and MOSEK~\cite{Mosek.2022}. Additionally, we define the grid~${\mathcal{G}=\{10^{-3},10^{-2.7},\dots,10^{3}\}}$ and optimize over~${\epsilon \in \mathcal{G}}$ in~\eqref{eq:sd_norm_bounded_stability_op_simplified} and over~${(\epsilon_1,\epsilon_2) \in \mathcal{G} \times \mathcal{G}}$ in~\eqref{eq:sd_norm_bounded_performance_op}.

\pgfplotsset{compat=1.8}
\begin{figure}[tb!]
\centering
\setlength{\fwidth}{5.5cm}
\setlength{\fheight}{2cm}
\definecolor{mycolor1}{RGB}{31,119,180}
\definecolor{mycolor2}{RGB}{214,39,40}
\begin{tikzpicture}

\begin{axis}[%
width=\fwidth,
height=\fheight,
ticklabel style = {font=\small},
label style = {font=\small},
scale only axis,
axis y line*=left,
xmin=125,
xmax=1025,
xlabel={Number of data points $N$},
xlabel style={yshift=4pt},
ymin=8,
ymax=34.5,
ylabel style={align=center, xshift=-2pt},
ylabel={$f_{\mathrm{c,min}}$ in \si{Hz} \textcolor{mycolor1}{\rule[0.5ex]{0.3cm}{1pt}}},
axis background/.style={fill=white}
]
\addplot [color=mycolor1, forget plot, thick]
 plot [error bars/.cd, y dir=both, y explicit, error bar style={line width=0.5pt}, error mark options={line width=0.5pt, mark size=3.0pt, rotate=90}]
 table[row sep=crcr, y error plus index=2, y error minus index=3]{%
100	nan	nan	nan\\
150	31.3558349609375	2.57649012308747	2.57649012308747\\
200	25.1954055786133	4.35777815211613	4.35777815211613\\
250	18.5747634887695	2.96231039990762	2.96231039990762\\
300	16.5877937316895	2.29763411883318	2.29763411883318\\
350	14.1695198059082	2.24777441251922	2.24777441251922\\
400	13.5322738647461	1.75136117195063	1.75136117195063\\
450	12.0191940307617	1.96972510554891	1.96972510554891\\
500	12.2783363342285	1.5802878925557	1.5802878925557\\
550	11.6636932373047	1.81576719728043	1.81576719728043\\
600	11.1952407836914	1.19121745204085	1.19121745204085\\
650	10.8508255004883	1.11623381196154	1.11623381196154\\
700	10.7335830688477	0.958345289846558	0.958345289846558\\
750	10.3948097229004	1.14124743738174	1.14124743738174\\
800	10.1885543823242	1.05120641105878	1.05120641105878\\
850	9.96351928710937	1.26135576096231	1.26135576096231\\
900	9.88223571777344	1.07121419465824	1.07121419465824\\
950	9.39556732177734	0.926335203771018	0.926335203771018\\
1000	9.38313369750977	1.0664377963072	1.0664377963072\\
};
\label{plots:fcmin}
\end{axis}

\begin{axis}[
width=\fwidth,
height=\fheight,
ticklabel style = {font=\small},
label style = {font=\small},
scale only axis,
axis y line*=right,
xmin=125,
xmax=1025,
axis x line=none,
ymin=-0.05, ymax=1.05,
ylabel style={align=center, xshift =-10pt},
ylabel={Share of datasets for \\ which~\eqref{eq:sd_norm_bounded_stability_op_simplified} is feas. \ref{plots:success_rate}}
]
\addplot[mycolor2, mark=o, only marks, thick]
 table[row sep=crcr]{%
    100 0 \\
    150 0.2 \\
    200 0.5 \\
    250 1 \\
    300 1 \\
    350 1 \\
    400 1 \\
    450 1 \\
    500 1 \\
    550 1 \\
    600 1 \\    
    650 1 \\ 
    700 1 \\ 
    750 1 \\ 
    800 1 \\ 
    850 1 \\ 
    900 1 \\ 
    950 1 \\ 
    1000 1 \\ 
}; 
\label{plots:success_rate}
\end{axis}
\end{tikzpicture}%
\vspace{-0.1cm}

\setlength{\belowcaptionskip}{-2pt}

\caption{Minimum control frequency required to ensure robust stability 
for different amounts of randomly drawn 
training data. The error bars represent $\pm$ one standard deviation. The circles show the proportion of datasets for which a stabilizing controller can be found by solving~\eqref{eq:sd_norm_bounded_stability_op_simplified}.}
    \label{fig:2dquad_fc_N}
\end{figure}
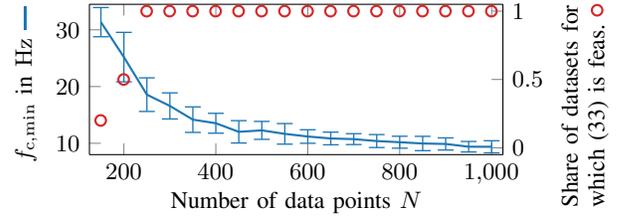
\cref{fig:2dquad_fc_N} shows the MCF computed by solving~\eqref{eq:sd_norm_bounded_stability_op_simplified} for randomly drawn training sets of increasing size $N$, where ten different datasets are drawn for each value of $N$. We also provide the proportion of datasets for which~\eqref{eq:sd_norm_bounded_stability_op_simplified} is feasible. 
We observe that a certain amount of data is required to stabilize the system and that the control frequency can be reduced by two-thirds when more data is available.
\definecolor{mycolor1}{RGB}{31,119,180}
\definecolor{mycolor2}{RGB}{214,39,40}
\definecolor{mycolor3}{RGB}{44,160,44}
\definecolor{mycolor4}{RGB}{255,127,14}
\begin{figure}
\centering    
\includegraphics[width=0.47\textwidth, natwidth=660, natheight=443]{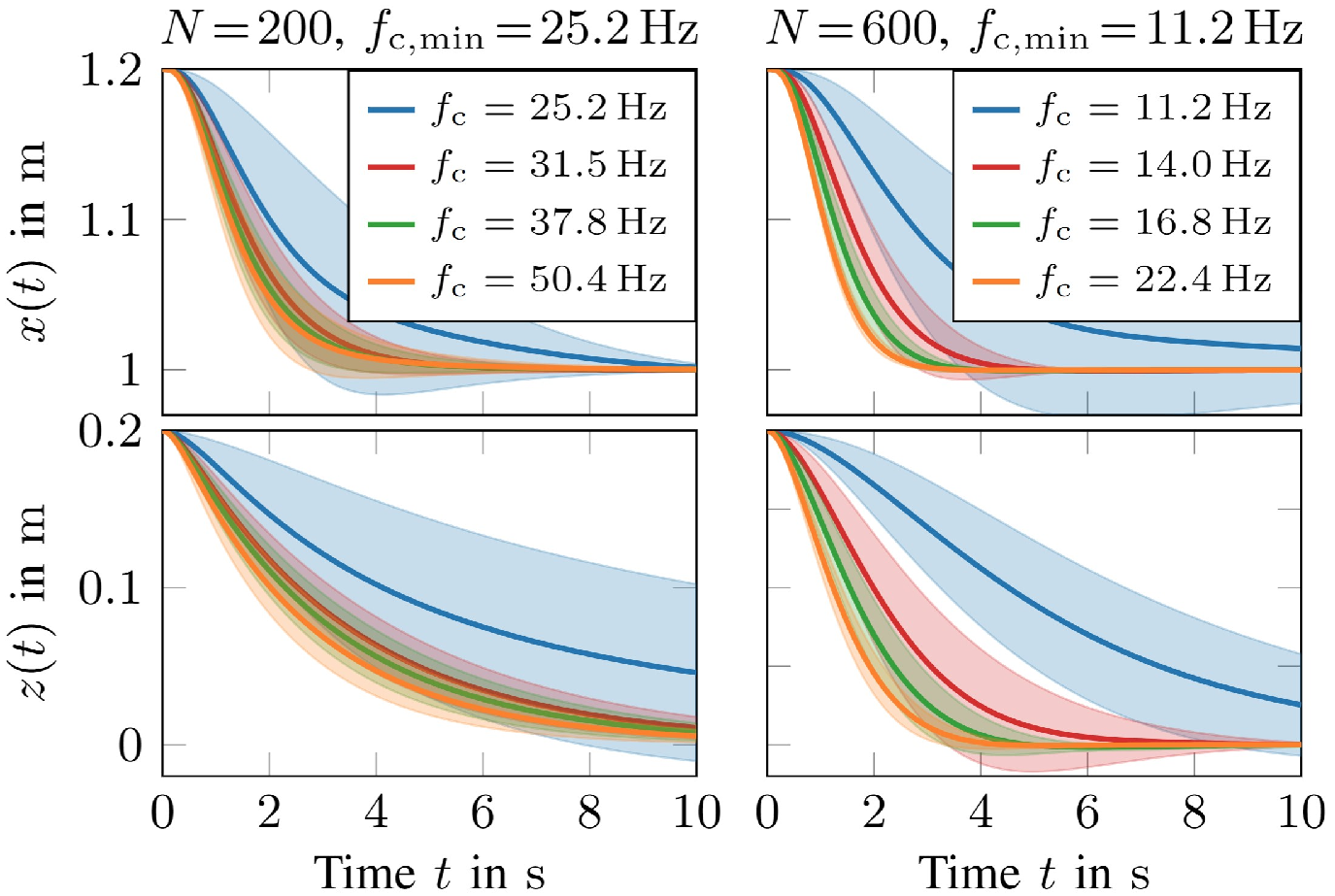}
    \vspace{-0.1cm}
    \setlength{\belowcaptionskip}{-2pt}
    \caption{Quadrotor trajectories for different control frequencies ${f_\mathrm{c}=\xi f_\mathrm{c,min}}$, where ${\xi\in \{{\color{mycolor1} 1},{\color{mycolor2} 1.25},{\color{mycolor3} 1.5},{\color{mycolor4} 2}\}}$, and different amounts of training data. The shaded areas represent $\pm$ one standard deviation. 
    Convergence to the setpoint significantly improves, and variance reduces if the control frequency is increased from its minimum value $f_\mathrm{c}=f_\mathrm{c,min}$.}
    \label{fig:impact_of_xi}
\end{figure}  

We also investigate the impact of the control frequency and model uncertainty on performance. For this, we set the desired operating point to ${\bm{x}_\mathrm{e} = [1\,\si{m},0,0,0,0,0]\tp}$, ${\bm{u}_\mathrm{e} = [0.4905\,\si{N},0.4905\,\si{N}]\tp}$, the initial state to $\bm{x}(0) = [1.2\,\si{m},0,0.2\,\si{m},0,0,0]\tp$ and the weight matrices in~\eqref{eq:sd_cost} to ${\bm{Q}_J = \mathrm{diag}([100,1,100,1,100,1])}$, ${\bm{R}_J = 0.01 \bm{I}_2}$.
As discussed in~\cref{subsec:meth_sd_stability}, robust stability of the linearized system is guaranteed for all control frequencies $f_\mathrm{c} =\xi f_\mathrm{c,min}$ with $\xi \geq 1$. We evaluate ${\xi\in \{1,1.25,1.5,2\}}$ with ten randomly drawn training sets each and compute the optimized controller by solving~\eqref{eq:sd_norm_bounded_performance_op}. \cref{fig:impact_of_xi} shows the 
quadrotor trajectories obtained from simulating the system for $10\,\si{s}$ with mean and standard deviation for ${N\in\{200,600\}}$. We observe the transient behavior improves significantly, and the variance reduces if the control frequency is increased from its minimum value. 
\input{figures/cost_scatter}

For a systematic analysis of performance, we optimize for many frequencies ${f_\mathrm{c} \in [10,30]\,\si{Hz}}$ and simulate five different initial conditions.
\cref{fig:cost_scatter} shows the average LQR cost for a horizon of $10\,\si{s}$ with the weight matrices $\bm{Q}_J$ and $\bm{R}_J$ and the contour lines of the cost.
We observe a tradeoff between data and control frequency: As model uncertainty decreases due to more data, similar performance is achieved at a lower control frequency. 
On the other hand, for example, if we increase $f_\mathrm{c}$ by~$33\%$ from~${18\,\si{Hz}}$ to~${24\,\si{Hz}}$, only half the amount of data (${N=400}$ instead of~${N=800}$) is needed to get the same cost~$J=5.6$.
Increasing the control frequency for a given amount of data reduces the cost, for example, by ~$42\%$ when increasing~$f_\mathrm{c}$ by~$29\%$ from~${14\,\si{Hz}}$ to~${18\,\si{Hz}}$ for~${N=350}$.
Furthermore, it is evident from the contour lines' shape that the sensitivity of the performance with respect to the control frequency increases with the size of the training data set.

\section{Discussion} \label{sec:disc}


\cref{fig:impact_of_xi} and~\cref{fig:cost_scatter} demonstrate that a higher control frequency improves performance and reduces variance. Considering this, the MCF provides a lower bound that allows us to safely reduce the control frequency, for example, to save computation or communication resources.

As illustrated in~\cref{fig:2dquad_fc_N} and~\cref{fig:cost_scatter}, the amount of data required for stability or achieving a specific performance depends on the frequency at which we can run the controller. A slight increase in control frequency can compensate for a significant lack of data, as is typical for physical systems such as robots, where data collection is expensive. 

Stable regulation of the quadrotor is achieved for all simulated initial conditions, even when operating at the MCF, as shown in~\cref{fig:impact_of_xi}.
This is remarkable, as the computation of the MCF via~\eqref{eq:sd_norm_bounded_stability_op_simplified} is based on the linearized dynamics~\eqref{eq:sd_norm_bounded_open_loop_system} and only considers the uncertainty corresponding to the GP variance, not the error due to linearizing the true nonlinear system~\eqref{eq:prob_open_loop_system}. One reason is that the stability conditions in~\cref{the:sd_norm_bounded_step2} are sufficient but not necessary conditions, and thus, the MCF includes some conservatism that may compensate for the linearization error, as is the case for our example.

\section{Conclusion} \label{sec:conc}
This work considers the control frequency as a design parameter for learning-based control of uncertain systems. 
To this end, we combine learning a continuous-time dynamics model using GPs with robust sampled-data control. This enables us to control the system at different frequencies without re-learning the model and to study the role of both the control frequency and the amount of data.
We show that there is a tradeoff between the two design parameters in terms of stability and performance: 
Increasing the control frequency can make up for a lack of data and vice versa.

\bibliographystyle{IEEEtran}
\bibliography{ref}

\end{document}